\documentclass{article}
\usepackage{geometry}
\geometry{
 a4paper,
 total={170mm,257mm},
 left=20mm,
 top=20mm,
 }
\usepackage[utf8]{inputenc}

\usepackage[square, sort, numbers]{natbib} 

\usepackage{amsmath}
\usepackage{amssymb}
\usepackage{amsfonts}
\usepackage{amsthm}
\usepackage{epsfig, graphicx, import, color}
\graphicspath{ {./figures/} }
\usepackage{subcaption}
\usepackage[english]{babel}
\usepackage[autostyle]{csquotes}
\usepackage{mathtools}
\usepackage{makecell}
\usepackage{dutchcal}

\numberwithin{equation}{section}
\newtheorem{prop}{Proposition}
\newtheorem*{prop*}{Proposition}

\newtheorem*{lemma*}{Lemma}
\newtheorem{corollary}{Corollary}
\newtheorem*{corollary*}{Corollary}

\theoremstyle{remark}
\newtheorem{remark}{Remark}
\newtheorem*{remark*}{Remark}

\usepackage[title]{appendix}

\usepackage{enumerate}

\setcounter{tocdepth}{2}

\title{A Poisson map from kinetic theory to hydrodynamics with non-constant entropy}

\author{Ching Lok Chong}
\date{14 Jan 2022 \\\quad\\OCIAM, Mathematical Institute,
University of Oxford,
Andrew Wiles Building,
\\ Radcliffe Observatory Quarter,
Woodstock Road,
Oxford,
OX2 6GG,
UK}

\begin{document}

\maketitle

\begin{abstract}
Kinetic theory describes a dilute monatomic gas using a distribution function $f(q,p,t)$, the expected phase-space density of particles at a given position $q$ with a given momentum $p$. The distribution function evolves according to the collisionless Boltzmann equation in the high Knudsen number limit. Fluid dynamics provides an alternative description of the gas using macroscopic hydrodynamic variables that are functions of position and time only. The mass, momentum and entropy densities of the gas evolve according to the compressible Euler equations in the limit of vanishing viscosity and thermal diffusivity.
Both systems can be formulated as noncanonical Hamiltonian systems. Each configuration space is an infinite-dimensional Poisson manifold, and the dynamics is the flow generated by a Hamiltonian functional via a Poisson bracket.
We construct a map $\mathcal{J}_1$ from the space of distribution functions to the space of hydrodynamic variables that respects the Poisson brackets on the two spaces. This map is therefore a Poisson map. It maps the $p$-integral of the Boltzmann entropy $f\log f$ to the hydrodynamic entropy density. This map belongs to a family of Poisson maps to spaces that include generalised entropy densities as additional hydrodynamic variables. The whole family can be generated from the Taylor expansion of a further Poisson map that depends on a formal parameter.
If the kinetic-theory Hamiltonian factors through the Poisson map $\mathcal{J}_1$, an exact reduction of kinetic theory to fluid dynamics is possible. However, this is not the case. 
Nonetheless, by ignoring the contribution to the Hamiltonian from the entropy of the distribution function relative to its local Maxwellian, a distribution function defined by the $p$-moments $\int d^np \ (1,p,|p|^2) f$, we construct an approximate Hamiltonian that factors through the map. The resulting reduced Hamiltonian, which depends on the hydrodynamic variables only, generates the compressible Euler equations. We can thus derive the compressible Euler equations as a Hamiltonian approximation to kinetic theory.
We also give an analogous Hamiltonian derivation of the compressible Euler--Poisson equations with non-constant entropy, starting from the Vlasov--Poisson equation.
\\\quad\\
\textbf{Keywords:} kinetic theory, fluid dynamics, Hamiltonian reduction, Poisson maps
\end{abstract}


\section{Introduction}\label{sec:intro}

There are different levels of descriptions of a dilute monatomic gas. 
In the $1$-particle kinetic description, the gas is described using a \emph{distribution function} $f(q,p,t)$, which gives the expected phase space density of particles at position $q$ carrying momentum $p$. An uncharged monatomic gas in the Boltzmann--Grad limit evolves according to the \emph{Boltzmann equation} \cite{Cercignani88book, BodineauGallagherSaintRaymond17}:
\begin{align}\label{Boltzmann1}
\frac{\partial f}{\partial t} + p \cdot \frac{\partial f}{\partial q} = \frac{1}{\mathrm{Kn}}C[f],
\end{align}
where $C[f]$ is a quadratic integral operator known as the \emph{Boltzmann collision operator}, and $\mathrm{Kn}$ is the Knudsen number. In the limit $\mathrm{Kn} \rightarrow \infty$, (\ref{Boltzmann1}) becomes the \emph{collisionless Boltzmann equation}. 
A dilute gas of charged particles interacting weakly under the electrostatic force can also be described by $1$-particle kinetic theory. In the Vlasov limit, the distribution function $f$ evolves according to \emph{Vlasov--Poisson equation} \cite{BraunHepp77, Jabin14, Kardar07book}:
\begin{align}\label{Vlasov Poisson}
\frac{\partial f}{\partial t} + p \cdot \frac{\partial f}{\partial q} - \frac{\partial}{\partial q} \left( \int d^3 q^\prime d^3 p^\prime \ \frac{e^2 f(q^\prime,p^\prime,t)}{4\pi \lvert q - q^\prime \rvert} \right) \cdot \frac{\partial f(q,p,t)}{\partial p} = 0,
\end{align}
where $e$ is the suitably rescaled charge of the particles.

An alternative theory that describes the gas is \emph{fluid dynamics}, where we work with macroscopic \emph{hydrodynamic variables} that are functions of position and time only. It is convenient to take the \emph{momentum density} $m(q,t)$, \emph{mass density} $\rho(q,t)$, and the \emph{entropy density} $s(q,t)$ as the hydrodynamic variables. In the limit of vanishing viscosity and thermal diffusivity, the time evolution of the hydrodynamic variables $(m,\rho,s)$ are governed by the \emph{compressible Euler equations}
\begin{align}
\frac{\partial \rho}{\partial t} + \frac{\partial }{\partial q} \cdot m & = 0, \\
\frac{\partial m}{\partial t} + \frac{\partial}{\partial q} \cdot \left( \frac{m \ m}{\rho} \right) & = - \frac{\partial P(\rho,s)}{\partial q}, \\
\frac{\partial s}{\partial t} + \frac{\partial }{\partial q} \cdot \left(\frac{m}{\rho} s\right) & = 0,
\end{align}
where $P(\rho,s)$ is the pressure as determined by the equation of state.

Both $1$-particle kinetic theory and ideal fluid dynamics are \emph{noncanonical Hamiltonian systems} \cite{Morrison98, Salmon88, Holm85}. The time evolution of any functional $F[\cdot]$ of the dynamical variables in each theory is given by
\begin{align}
\frac{dF}{dt} = \left\{ F, H\right\}_{NC},
\end{align}
where $H$ is the \emph{Hamiltonian functional}, and $\{\cdot,\cdot\}_{NC}$ is an \emph{abstract Poisson bracket} or \emph{noncanonical Poisson bracket}. The respective configuration spaces for $1$-particle kinetic theory and fluid dynamics can be treated as infinite-dimensional \emph{Poisson manifolds}, and the time evolution on each space is the flow generated by a Hamiltonian functional $H$ \cite{Morrison98, Marsden83, Marsden84a, Holm85}.

It is natural to ask whether the hydrodynamic description can be derived from the $1$-particle kinetic description in a manifestly Hamiltonian way.
In this paper, we seek a \emph{Poisson map} $\mathcal{J}_1$ taking $f$ to the hydrodynamic variables $(m[f],\rho[f],s[f])$ that respects the Poisson brackets on the two spaces. 
Given the Poisson map $\mathcal{J}_1$, suppose that the kinetic-theory Hamiltonian $H_{KT}$ factors through the Poisson map $\mathcal{J}_1$, which means that we have another functional $H_{red}$ of the hydrodynamic variables that satisfies $H_{KT} = H_{red} \circ \mathcal{J}_1$. The time evolution of the image $\mathcal{J}_1[f] = (m[f],\rho[f],s[f])$ under the kinetic-theory Hamiltonian is then exactly reproduced by the Hamiltonian system on the hydrodynamic variables generated by $H_{red}$.
We pursue this \emph{reduction of noncanonical Hamiltonian systems} in this paper (\textbf{\ref{subsec:myreduction}}) \cite{Marsden84a, Marsden84b, Marsden13book, Marsden83, Guillemin80}. There are also other ways to obtain reduced Hamiltonian systems from larger ones, which we discuss in section \textbf{\ref{subsec:notmyreduction}} \cite{Meyer73, MontgomeryMarsdenRatiu84, MarsdenWeinstein74, MarsdenRatiu86, Cannas01book, Marsden83, PerinChandreMorrisonTassi15, Tassi15, Tassi16, Tassi17, KaltsasThroumoulopoulosMorrison17,ChesnokovPavlov12,Pavlov14,Yu2000}.

We construct the Poisson map $\mathcal{J}_1$ in section \textbf{\ref{sec:Poissonmap}}. The map $\mathcal{J}_1$ maps the $p$-integral of the \emph{Boltzmann entropy} $f\log f$ onto the hydrodynamic entropy density. It belongs to a family of Poisson maps $\mathcal{J}_A$ that include \emph{generalised entropies} $s_a[f] =\int d^n p \ f(\log f)^a$ for $a=0,1,2,\ldots,A$ as additional hydrodynamic variables (section \textbf{\ref{subsec:JA}}). These Poisson maps are non-linear generalisations of that obtained in \cite{Guillemin80, Marsden83}, where only $m[f]$ and $\rho[f]$ are considered. The non-linearities appearing in these maps are closely related to the eigenfunctions of the Euler differential operator $x(d/dx)$ (section \textbf{\ref{subsec:JA}}). Furthermore, the $\mathcal{J}_A$ arises from the truncated Taylor expansion of a Poisson map $\mathsf{J}$ that depends on a formal parameter $\xi$. This larger map $\mathsf{J}$ has the \emph{Tsallis entropy} $\rho_{\xi}[f] = \int d^n p f^{1+\xi}$ \cite{Tsallis88} as a hydrodynamic variable (section \textbf{\ref{subsec:Jeps}}).

One then wonders how this could possibly circumvent the \emph{moment closure problem}, where a finite number of \emph{moments} ($p$-integrals of $f$) cannot reproduce kinetic theory exactly. 
An elementary example is the raw moment hierarchy, where one considers the raw moments $\mu_l = \int d^np \ p\ldots p f $ ($l$ times), and find that the time evolution of $\mu_l$ depends on $\mu_{l+1}$. 
Sadly, our generalised entropies \emph{do not} circumvent the moment closure problem. 
The Hamiltonian functional $H_{KT}[f] = \int d^nq d^np \ \lvert p \rvert^2 f/2$ for $1$-particle kinetic theory does not factor through any of the Poisson maps $\mathcal{J}_A$, and in particular, not through $\mathcal{J}_1$. 
While the Poisson map $\mathcal{J}_1$ only defers the moment closure problem to the Hamiltonian functional instead of solving it, we can still make progress by taking the perspective of \emph{approximate reduction}. We seek a physically motivated but otherwise uncontrolled approximation that allows the derivation of reduced equations. 
In section \textbf{\ref{sec:Hamiltonian}}, we obtain a decomposition $H_{KT} = H_{fluids} \circ \mathcal{J}_1 + \Delta H$, where $H_{fluids}$ is the fluid Hamiltonian that generates the compressible Euler equations for a monatomic ideal gas, and $\Delta H$ is a term that depends on the relative entropy density $r[f|f_m]$ of $f$ against its local Maxwellian $f_m$.
By ignoring $\Delta H$ in the Hamiltonian functional, we obtain a \emph{manifestly Hamiltonian} derivation of the compressible Euler equations from $1$-particle kinetic theory -- the Hamiltonian structure of the hydrodynamic equations is inherited from $1$-particle kinetic theory through the Poisson map $\mathcal{J}_1$. This type of derivation of reduced Hamiltonian models through uncontrolled approximations in the Hamiltonian functional, or equivalently in the Lagrangian density of a variational principle, is not uncommon in fluid dynamics and mathematical physics, and we give some examples in section \textbf{\ref{subsec:Hdiscuss}} \cite{MilesSalmon85, Salmon85, Whitham67,Whitham70,Whitham74book, MalomedWeinstein96, Kevrekidis09book, KhesinMisiolekModin20,KhesinMisiolekModin19, MarsdenMorrisonWeinstein84}.

We close this paper with an investigation on the term $\Delta H$. In section \textbf{\ref{subsec:globalM}}, we obtain a bound on the size of $\Delta H$ in the near global Maxwellian regime in terms of constants of motion for the collisionless Boltzmann equation (\ref{Boltzmann1}) and the Vlasov--Poisson equation (\ref{Vlasov Poisson}). In section \textbf{\ref{subsec:isotropic}}, we obtain an approximation of $\Delta H$ in terms of the generalised entropy variables for isotropic, near local Maxwellian distribution functions using a formal perturbation expansion.

\section{Symplectic and Poisson geometry}\label{sec:maths}

In this section we set out our notation and summarise a few well-known results in symplectic and Poisson geometry that we will make extensive use of later. The main references are \cite{Weinstein83, Cannas01book}. Readers familiar with symplectic and Poisson geometry can skip this section and only refer to it for the notation used in the subsequent sections. We will assume that all manifolds and other objects are smooth. 

\subsection{Basic definitions}\label{subsec:defs}
A \emph{symplectic manifold} $S$ is a $2n$-dimensional manifold together with a \emph{symplectic $2$-form} $\omega$ on $S$ that is closed and nondegenerate: $d\omega = 0$ and $\omega^n \rvert_x \neq 0$ for all $x \in S$. Given a function $f$ on $S$, its \emph{Hamiltonian vector field} $X_f$ is the unique solution to the equation
$\iota_{X_f}\omega = df$, where $\iota$ denotes the interior product. 
The \emph{canonical Poisson bracket} $\{\cdot,\cdot\} : C^\infty(S)\times C^\infty(S) \rightarrow C^\infty(S)$ on $S$ is defined as
\begin{align}
\{f,g\} = - \mathcal{L}_{X_f} g = \mathcal{L}_{X_g} f = \omega(X_f,X_g),
\end{align}
where $f$ and $g$ are functions on $S$, and $\mathcal{L}_X$ is the Lie derivative along the vector field $X$. The canonical Poisson bracket satisfies a list of important identities: for all $\alpha,\beta\in\mathbb{R}$ and all $f,g,h \in C^\infty(S)$,
\begin{align}
\label{Poisson_start}
\{\alpha f+\beta g,h\} &= \alpha \{f,h\}+\beta\{g,h\}, &\text{($\mathbb{R}$--\emph{bilinearity})} \\
\{f,g\} &= -\{g,f\}, &\text{(\emph{antisymmetry})} \\
\label{Leibniz}
\{f,gh\} &= \{f,g\}h + \{f,h\}g, &\text{(\emph{Leibniz identity})} \\
0 &= \{\{f,g\},h\} + \{\{g,h\},f\} + \{\{h,f\},g\}. &\text{(\emph{Jacobi identity})}
\label{Poisson_end}
\end{align}
Conversely, a manifold $P$ equipped with a binary operation $\{\cdot,\cdot\}_{NC}: C^\infty(P)\times C^\infty(P)\rightarrow C^\infty(P)$ satisfying (\ref{Poisson_start}-\ref{Poisson_end}) is called a \emph{Poisson manifold}, and the operation $\{\cdot,\cdot\}_{NC}$ is called a \emph{noncanonical Poisson bracket}. All symplectic manifolds are Poisson manifolds, but not necessarily vice versa. To distinguish between Poisson brackets on symplectic manifolds and non-symplectic Poisson manifolds, we will always put subscripts on the latter and leave the former unsubscripted.
The local theory of finite-dimensional Poisson manifolds is extensively studied in \cite{Weinstein83}. 
An important identity on finite-dimensional symplectic and Poisson manifolds is that
\begin{align}\label{ydiff}
\{y(f),g\}_{NC} = y^\prime (f) \{f,g\}_{NC}.
\end{align}
We will make extensive use of (\ref{ydiff}) in sections \textbf{\ref{subsec:JA}} and \textbf{\ref{subsec:Jeps}}.

\subsection{Poisson brackets and affine functions on a cotangent bundle}\label{subsec:cotangent}

The \emph{cotangent bundle} $T^*M$ of an $n$-dimensional manifold $M$ is a $2n$-dimensional symplectic manifold that we will make extensive use of later. The points in $T^*M$ are pairs $(q,p)$, where $q$ is a point on $M$, and $p \in T^*_qM$ is a cotangent vector at $q$. 
On a coordinate neighbourhood $U$ of $q$, there are induced coordinates $(q^1,\ldots,q^n,p_1,\ldots,p_n)$ known as \emph{canonical coordinates} on $T^*U$, where $q^i$ are the coordinates of the basepoint $q$ and $p_i = \langle p, \partial/\partial q^i \rangle_q$ are the components of the cotangent vector $p$. Here $ \left\langle \cdot,\cdot\right\rangle_{q}: T_qM \times T_q^*M \rightarrow \mathbb{R}$ denotes the dual pairing. We will henceforth use the summation convention on the Latin indices $i,j = 1,\ldots n$.
In canonical coordinates, the local expression of the symplectic form on $T^*M$ is $\omega = dq^i \wedge dp_i$, and its Poisson bracket $\{\cdot,\cdot\}$ is given as
\begin{align}\label{classicalPB}
\{f,g\} = \frac{\partial f}{\partial q^i}\frac{\partial g}{\partial p_i} - \frac{\partial g}{\partial q^i}\frac{\partial f}{\partial p_i},
\end{align}
for functions $f,g$ on $T^*M$.

Some natural geometric constructions on $M$ can be formulated using the Poisson bracket (\ref{classicalPB}) on $T^*M$. Let $u = u^i{\partial}/{\partial q^i}$ be a vector field on $M$.
We can associate $u$ to a function $\left\langle u,p\right\rangle: (q,p) \mapsto \left\langle u(q),p\right\rangle_{q}$ on $T^*M$, which in canonical coordinates is $\left\langle u,p\right\rangle = u^i(q)p_i$.
Any function on $T^*M$ that is linear in $p$ arises from the dual pairing of $p$ with a vector field on $M$. We call these functions \emph{fibrewise linear}.

A function $g$ on $M$ defines a function $\pi^*g = g \circ \pi$ on $T^*M$ by pullback, where $\pi:T^*M\rightarrow M$ is the projection. In canonical coordinates, these functions are precisely those that depend on $q$ only: $g(q,p) = g(q)$. We call these function \emph{fibrewise constant}. By analogy with affine functions defined on a vector space, we call a function on $T^*M$ \emph{fibrewise affine} if it is a linear combination of fibrewise linear and fibrewise constant functions.
The space of fibrewise affine functions is a Lie subalgebra of the Lie algebra $(C^\infty(T^*M),\{\cdot,\cdot\})$ \cite{Guillemin80,Marsden83}. Let $u,v$ be vector fields on $M$ and $g,h$ be functions on $M$. We have
\begin{align}
\label{qpPB1}
\left\{ \langle u,p\rangle,\langle v,p\rangle \right\} &= - \left\langle [u,v],p\right\rangle , \\
\label{qpPB2}
\left\{ \langle u,p\rangle, \pi^*g \right\} &= - \pi^*\left( \mathcal{L}_u g \right), \\
\label{qpPB3}
\left\{ \pi^* g, \pi^* h \right\} &= 0,
\end{align}
where $[u,v] = \mathcal{L}_u v$ is the Lie bracket for vector fields on $M$, given in coordinates as
\begin{align}
\label{LieVect}
[u,v]^i = u^j\frac{\partial v^i}{\partial q^j} - v^j\frac{\partial u^i}{\partial q^j}.
\end{align}
While symmetric contravariant (upper index) tensors $T^{i_1\ldots i_k}(q)$ on $M$ induce fiberwise polynomial functions $T^{i_1\ldots i_k}(q)p_{i_1}\ldots p_{i_k}$ on $T^*M$, the quadratic and higher degree polynomials generate a subalgebra that necessarily includes polynomials of arbitrarily large degree. 

The Lie algebra of fibrewise affine functions on the cotangent bundle can also be constructed as a semidirect product Lie algebra. The space of vector fields on $M$, denoted $\mathrm{Vect}(M)$, is a Lie algebra under the Lie bracket for vector fields (\ref{LieVect}). The action of vector fields on functions by Lie differentiation defines a representation of $\mathrm{Vect}(M)$ on $C^\infty(M)$. Thus we can construct the \emph{semidirect product Lie algebra} $\mathfrak{s}_0 = \mathrm{Vect}(M) \ltimes C^\infty(M)$. As a vector space, $\mathfrak{s}_0$ consists of pairs $(u,g)$ where $u$ is a vector field on $M$ and $g$ is a function on $M$. The Lie bracket on $\mathfrak{s}_0$ is
\begin{align}
\left[(u,g),(v,h)  \right]_{\mathfrak{s}_0} = \left( [u,v],\mathcal{L}_u h - \mathcal{L}_v g \right).
\end{align}
Now consider the linear map $\mathfrak{s}_0 \rightarrow C^\infty(T^*M)$ defined by
\begin{align}
(u,g) \mapsto \langle u,p \rangle + \pi^* g.
\end{align}
This map is injective, and its image is precisely the space of fibrewise affine functions on $T^*M$. The identities (\ref{qpPB1}-\ref{qpPB3}) shows that this map is an anti-homomorphism of Lie algebras, namely that
\begin{align}
\left\{ \langle u,p\rangle + \pi^* g, \langle v,p\rangle + \pi^* h\right\} = -\left( \left\langle [u,v],p \right\rangle + \pi^*\left(\mathcal{L}_u h- \mathcal{L}_v g \right)  \right).
\end{align}
Thus $\mathfrak{s}_0$ can be considered as the Lie subalgebra of fibrewise affine functions in $(C^\infty(T^*M),\{\cdot,\cdot\})$, up to a sign. 

\subsection{Integration on symplectic manifolds}\label{subsec:integration}

Integration on a symplectic manifold interacts with the canonical Poisson bracket in a very similar way to how the trace interacts with the matrix commutator. The $2n$-dimensional symplectic manifold $S$ has a standard non-vanishing $2n$-form $dV = \omega^n/n!$ called the \emph{symplectic volume form}. (Despite the notation, $dV$ is not necessarily an exact form.) For the cotangent bundle $T^*M$ of an $n$-dimensional manifold $M$, the local expression for $dV$ in canonical coordinates is
\begin{align}
dV = dq^1\wedge \ldots \wedge dq^n \wedge dp_1 \wedge \ldots \wedge dp_n.  
\end{align}
The volume form induces a weakly non-degenerate symmetric bilinear form on the space of functions on $S$ by integration:
\begin{align}
(f,g) = \int_S dV \ fg,
\end{align}
where $f,g$ are functions on $S$ that satisfy sufficient decay conditions. The condition \enquote{weakly non-degenerate} means that if $(f,g)=0$ for all sufficiently decaying functions $g$, then $f=0$. Moreover, since $\{f,g\}\omega^n$ is an exact $2n$-form, we have 
\begin{align}
\int_S dV \ \{f,g\} = 0, \quad \int_S dV \ \{f,g\}h = \int_S dV \ f\{g,h\}
\end{align} 
for any $f,g,h\in C^\infty(S)$, subject to decay conditions on the integrands. These identities for $S = T^*M$ will be used extensively in section \textbf{\ref{sec:Poissonmap}}.


\subsection{The Lie--Poisson bracket on the dual of a Lie algebra}\label{subsec:LPbracket}

The dual space to a Lie algebra is an important example of a non-symplectic Poisson manifold. Let $\mathfrak{g}$ be a Lie algebra with Lie bracket $[\cdot,\cdot]$, which we assume to be finite-dimensional for now, and let $\mathfrak{g}^{*}$ be its dual vector space. There is a natural Poisson bracket on $\mathfrak{g}^{*}$, called the \emph{Lie--Poisson bracket}, that turns $\mathfrak{g}^{*}$ into a Poisson manifold. For functions $f(\mu),g(\mu)$ on $\mathfrak{g}^{*}$, it is given by
\begin{align}\label{LiePoissonbracket}
\{f,g\}_{\mathfrak{g}}(\mu) = \pm \left\langle \mu, \left[\frac{\partial f}{\partial \mu},\frac{\partial g}{\partial \mu} \right] \right\rangle_{\mathfrak{g}},
\end{align}
where the angle bracket denotes the dual pairing between $\mathfrak{g}^{*}$ and $\mathfrak{g}$, and we interpret ${\partial f}/{\partial \mu}$ to be taking values in $\mathfrak{g}$ instead of $\mathfrak{g}^{**}$. The bracket $\{\cdot,\cdot\}_{\mathfrak{g}}$ satisfies (\ref{Poisson_start}-\ref{Poisson_end}), so it is a Poisson bracket.
Geometrically, if $\mathfrak{g}$ is the Lie algebra of a Lie group $G$, we can also obtain the Lie--Poisson structure on $\mathfrak{g}^*$ reducing the symplectic structure on $T^*G$ by the natural left-(or right-)$G$ action \cite{Marsden83, Marsden84b, Marsden13book}.
If $\mathfrak{g}$ is infinite-dimensional, we take $\mathfrak{g}^*$ to be a \emph{smooth dual space} of $\mathfrak{g}$, a subspace of the topological dual space that has a weakly non-degenerate pairing with $\mathfrak{g}$. The Lie--Poisson bracket (\ref{LiePoissonbracket}) still makes sense on $\mathfrak{g}^*$ provided we replace the functions on $\mathfrak{g}^*$ with sufficiently regular functionals, and the ordinary derivatives with functional derivatives. Rigorous analytical treatments of Lie--Poisson dynamics on infinite-dimensional spaces can be found in \cite{EbinMarsden70, ChernoffMarsden74book, Khesin08book}.


\section{Noncanonical Hamiltonian mechanics, Poisson maps and reduction}\label{sec:physics}

In this section we briefly review \emph{noncanonical Hamiltonian mechanics} and outline the \emph{reduction of noncanonical Hamiltonian systems} that we will pursue in the subsequent sections. Detailed expositions on noncanonical Hamiltonian mechanics with an emphasis on fluid dynamics and kinetic theory can be found in \cite{Morrison98, Salmon88, Tassi17}.

Consider the configuration space $\mathfrak{M}$ for a set of dynamical variables. In physical applications, $\mathfrak{M}$ is often an infinite-dimensional space of \emph{field variables}, i.e. a space of sections of a given vector bundle over a finite-dimensional manifold. Let $\mathcal{F}(\mathfrak{M})$ denote a space of sufficiently regular functionals $F:\mathfrak{M}\rightarrow \mathbb{R}$. We think of functionals $F \in \mathcal{F}(\mathfrak{M})$ as observable quantities on the filed variables in $\mathfrak{M}$.
Now let $\mathfrak{M}$ be a Poisson manifold, with Poisson bracket $\{\cdot,\cdot\}_{\mathfrak{M}}$. Define the \emph{noncanonical Hamiltonian system} on $\mathfrak{M}$ generated by a \emph{Hamiltonian functional} $H\in \mathcal{F}(\mathfrak{M})$ to be the dynamical system where any $F \in \mathcal{F}(\mathfrak{M})$ evolves in time as
\begin{align}
\frac{d{F}}{dt} = \{F,H\}_{\mathfrak{M}}.
\end{align}
If $\mathfrak{M}$ is a space of fields, letting $F[z] = \langle \psi , z \rangle_\mathfrak{M}$ for arbitrary test functions $\psi$ recovers the equation of motion for the state $z \in \mathfrak{M}$ in a weak sense.
For the rest of the paper, each configuration space considered is the smooth dual space to an infinite-dimensional Lie algebra equipped with the Lie--Poisson bracket (see section \textbf{\ref{subsec:LPbracket}}). Such configuration spaces are ubiquitous in fluid dynamics \cite{Marsden84a, Marsden84b, Morrison80, Holm85} and kinetic theory \cite{Marsden83, Holm85}.

\subsection{Poisson maps and reduction}\label{subsec:myreduction}

Given a noncanonical Hamiltonian system $(\mathfrak{M}_1,\{\cdot,\cdot\}_1,\tilde{H})$, we can determine the time evolution of all functionals $\tilde{F} \in \mathcal{F}(\mathfrak{M}_1)$. However, $\mathfrak{M}_1$ may have unimportant degrees of freedom. We would like to reformulate the noncanonical Hamiltonian system on a reduced configuration space $\mathfrak{M}_2$ that contains only the relevant degrees of freedom. We formalise this notion using Poisson manifolds and Poisson maps.

Given Poisson manifolds $(\mathfrak{M}_1,\{\cdot,\cdot\}_1)$ and $(\mathfrak{M}_2,\{\cdot,\cdot\}_2)$, we say that a map $\mathcal{J}:\mathfrak{M}_1\rightarrow\mathfrak{M}_2$ is a \emph{Poisson map}, if for all functionals $F,G \in \mathcal{F}(\mathfrak{M}_2)$,
\begin{align}\label{JPoisson}
\left\{ F\circ \mathcal{J}, G\circ \mathcal{J} \right\}_1 = \left\{ F, G \right\}_2 \circ \mathcal{J},
\end{align}
where $F\circ \mathcal{J}$ and $G\circ \mathcal{J}$ are now functionals on $\mathfrak{M}_1$ by composition. The pullback map $F \mapsto F \circ \mathcal{J}$ is commonly denoted as $\mathcal{J}^*:\mathcal{F}(\mathfrak{M}_2)\rightarrow \mathcal{F}(\mathfrak{M}_1)$. 
In the noncanonical Hamiltonian system $(\mathfrak{M}_1,\{\cdot,\cdot\}_1,\tilde{H})$, suppose we are only interested in the time evolution of functionals $\tilde{F} = F \circ \mathcal{J}$ for some $F\in\mathcal{F}(\mathfrak{M}_2)$.
If the Hamiltonian functional also factors through $\mathcal{J}$ i.e. $\tilde{H} \in \mathcal{F}(\mathfrak{M}_1)$ is of the form
\begin{align}\label{Hcollective}
\tilde{H} = H \circ \mathcal{J} \qquad \text{for some $H\in\mathcal{F}(\mathfrak{M}_2)$,}
\end{align}
then, since $\mathcal{J}$ is a Poisson map,
\begin{align}
\frac{d{\tilde{F}}}{dt}  = \left\{ \tilde{F}, \tilde{H} \right\}_1 = \left\{ F, H \right\}_2 \circ \mathcal{J},
\end{align}
so we can replace the evolution equation for $\tilde{F}$ with an evolution equation $\dot{F} = \{F,H\}_{2}$ for $F$, and forget about $\mathfrak{M}_1$ altogether. A Hamiltonian functional that factors through a Poisson map is called \emph{collective} in \cite{Guillemin80}.

The adjoint linear map of a Lie algebra homomorphism is an important example of a Poisson map. Let $\mathfrak{g}$ and $\mathfrak{h}$ be Lie algebras, and $\phi:\mathfrak{g}\rightarrow\mathfrak{h}$ be a Lie algebra (anti)-homomorphism. The adjoint linear map $\phi^* : \mathfrak{h}^* \rightarrow \mathfrak{g}^*$ is a Poisson map if we equip the duals spaces $\mathfrak{g}^*$ and $\mathfrak{h}^*$ with the Lie--Poisson bracket of the same (opposite) sign. In fact, the adjoint of a linear map $\phi:\mathfrak{g}\rightarrow\mathfrak{h}$ is a Poisson map if and only if $\phi$ is a Lie algebra homomorphism (\cite{Marsden83}, Lemma 8.2). (This result still holds formally if the Lie algebras are infinite-dimensional.) The Poisson map from $1$-particle kinetic theory to fluids with constant entropy is precisely the adjoint linear map to the inclusion of $\mathfrak{s}_0 = \mathrm{Vect}(M) \ltimes C^\infty(M)$ into $C^\infty(T^*M)$ as the Lie subalgebra of fibrewise affine functions met in section \textbf{\ref{subsec:cotangent}} \cite{Guillemin80, Marsden83}.

To carry out our programme of reduction, we construct a Poisson map from the configuration space for $1$-particle kinetic theory to that of ideal compressible fluids with non-constant entropy (section \textbf{\ref{sec:Poissonmap}}). However, the kinetic-theory Hamiltonian is not the pullback of an effective Hamiltonian in the hydrodynamic variables, so we make an \enquote{approximation} to the kinetic-theory Hamiltonian to make it so (section \textbf{\ref{sec:Hamiltonian}}). The rest of this paper is devoted to carry out this \emph{approximate reduction}.

\subsection{Other forms of Hamiltonain reduction}\label{subsec:notmyreduction}

There are other methods that can be used to construct Hamiltonian systems with fewer degrees of freedom than a more primitive system. Here we give some examples and compare them with those developed in section \textbf{\ref{subsec:myreduction}}. We will not pursue these methods any further in sections \textbf{\ref{sec:Poissonmap}} and \textbf{\ref{sec:Hamiltonian}}.

The classic example is \emph{Marsden--Weinstein reduction} of symplectic or Poisson manifolds \cite{Meyer73, MontgomeryMarsdenRatiu84, MarsdenWeinstein74, MarsdenRatiu86, Cannas01book, Marsden83}. Let $\mathfrak{M}$ be a symplectic or Poisson manifold, $G$ be a Lie group and $\mathfrak{g}$ be the Lie algebra of $G$. Suppose that we have a Hamiltonian $G$-action on $M$, with \emph{(equivariant) momentum map} $\mathcal{J}: \mathfrak{M} \rightarrow \mathfrak{g}^*$. In particular, $\mathcal{J}$ is a Poisson map such that the function $z \rightarrow \langle \mathcal{J}(z),X\rangle_\mathfrak{g}$ generates the flow of the one-parameter group $\exp(tX) \subset G$ on $\mathfrak{M}$ for all $X \in \mathfrak{g}$ \cite{Marsden13book, Guillemin80}. Under suitable conditions, the \emph{Marsden--Weinstein quotient} $\mathcal{J}^{-1}(0)/G$ is correspondingly a symplectic or Poisson manifold. A $G$-invariant Hamiltonian functional on $\mathfrak{M}$ induces a reduced Hamiltonian system on $\mathcal{J}^{-1}(0)/G$. Marsden--Weinstein reduction achieves a similar goal to the reduction illustrated in section \textbf{\ref{subsec:myreduction}}, where unimportant degrees of freedom are forgotten to form a reduced Hamiltonian system, but it uses the Poisson map $\mathcal{J}$ in a different way.

More recently, various Hamiltonian \enquote{reduced fluid models} for drift kinetics are derived by constructing a new but related Poisson bracket on the space of reduced variables in \cite{Tassi15, Tassi16, Tassi17}. A similar construction is used to find reduced descriptions for the one-dimensional Vlasov--Amp\'{e}re system in \cite{PerinChandreMorrisonTassi15}. 
We give a brief summary of the techniques involved. Let $(\mathfrak{M}_1,\{\cdot,\cdot\}_1)$ be the Poisson manifold of the more primitive variables, $\mathfrak{M}_2$ be the space of reduced variables, and $\pi: \mathfrak{M}_1 \rightarrow \mathfrak{M}_2$ be a map from primitive to reduced variables.
In these applications, given two functionals $F,G$ on $\mathfrak{M}_2$, the Poisson bracket of their pullbacks $\left\{ \pi^*F, \pi^*G \right\}_1$ cannot be expressed as the pullback of some other functional on $\mathfrak{M}_2$. There are excess variables not described by the image $\pi(\mathfrak{M}_1)$ that appear in the expression of $\left\{ \pi^*F, \pi^*G \right\}_1$.
These excess variables are eliminated by imposing a \emph{closure relation}, which is an additional functional relationship between the excess and reduced variables.
We can describe this as an embedding $\sigma: \mathfrak{M}_2 \rightarrow \mathfrak{M}_1$, whose image corresponds to the locus of the closure relation on $\mathfrak{M}_1$. We require that $\pi \circ \sigma = \mathrm{id}_2$, the identity map on $\mathfrak{M}_2$. 
Define a bracket $\{\cdot,\cdot\}_2$ on $\mathfrak{M}_2$ by
\begin{align}\label{pinonPoisson}
\left\{ F, G \right\}_2 = \sigma^* \left\{ \pi^* F, \pi^*G \right\}_1,
\end{align}
where $F$ and $G$ are arbitrary functionals on $\mathfrak{M}_2$. The bracket $\{\cdot,\cdot\}_2$ always satisfies (\ref{Poisson_start} - \ref{Leibniz}). If the closure relation $\sigma$ is so chosen that $\{\cdot,\cdot\}_2$ satisfies the Jacobi identity (\ref{Poisson_end}), ($\mathfrak{M}_2$,$\{\cdot,\cdot\}_2$) becomes a Poisson manifold. (The Jacobi identity for $\{\cdot,\cdot\}_2$ is checked manually in \cite{Tassi15,Tassi16,PerinChandreMorrisonTassi15}.) 
While $\sigma^{*}\pi^{*} = \mathrm{id}_2^{*}$, the composition $\pi^{*}\sigma^{*}$ is not the identity for functionals on $\mathfrak{M}_1$.
The relation (\ref{pinonPoisson}) is not the same as (\ref{JPoisson}), and in particular $\pi$ is \emph{not} a Poisson map.
 
In the applications described in \cite{Tassi15,Tassi16,PerinChandreMorrisonTassi15}, the primitive Hamiltonian $\tilde{H}$ generating the dynamics on $\mathfrak{M}_1$ is the pullback of some effective Hamiltonian $H$ on $\mathfrak{M}_2$ through $\pi$, that is, $ \tilde{H} = \pi^{*}H$.
The reduced dynamics is then given by
\begin{align}\label{notmyreductioneq}
\frac{d F}{dt} = \left\{ F, H \right\}_2 = \sigma^* \left\{ \pi^*F, \tilde{H} \right\}_1,
\end{align}
so the functional $F$ evolves according to the value of $\left\{ \pi^*F, \tilde{H} \right\}_1$ restricted on $\sigma(\mathfrak{M}_2)$. This framework provides a manifestly Hamiltonian way to impose certain \enquote{good} closure relations to obtain reduced evolution equations, which is often an improvement over imposing the closure relations directly in the primitive evolution equations.

In a separate class of applications, one seeks physically relevant Poisson submanifolds of the Poisson manifold of primitive variables. The Hamiltonian functional restricted to the submanifold then generates a reduced Hamiltonian system. Such reductions aim to obtain simpler descriptions of the primitive system under special conditions.
For example, the extended magnetohydrodynamics (MHD) system, which is an extension to ordinary MHD that incorporates the Hall effect \cite{Schnack09book}, is known to be a noncanonical Hamiltonian system \cite{AbdelhamidKawazuraYoshida15}. In \cite{KaltsasThroumoulopoulosMorrison17}, a Poisson submanifold corresponding to extended MHD configurations that are translationally invariant along the $z$-axis is found, and the corresponding reduced Hamiltonian system is studied in detail.

Another example in kinetic theory is the waterbag reduction of the Benney hydrodynamic chain \cite{ChesnokovPavlov12,Yu2000}. The Benney hydrodynamic chain \cite{Benney73} is a system of evolution equations that can be obtained as the moment equations of a one-dimensional Vlasov-like system. Let $a_1(q,t), \ldots,a_K(q,t)$ be hydrodynamic variables, with $a_1<a_2<\ldots a_K$, and consider the $K$-waterbag distribution functions of the form
\begin{align}\label{waterbagf}
f(q,p,t) = \sum_{k=1}^{K-1} f_k \mathbf{1}_{[a_k(q,t),a_{k+1}(q,t)]}(p),
\end{align}
where the $f_k$ are constants and $\mathbf{1}_{[a,b]}$ is the indicator function on the interval $[a,b]$. The $K$-waterbag distribution functions (\ref{waterbagf}) constitute a Poisson submanifold, and there is a reduced Hamiltonian system on the hydrodynamic variables $a_k(q,t)$ \cite{ChesnokovPavlov12,Yu2000}. This technique has been extended to obtain reduced descriptions in other kinetic systems \cite{ChesnokovPavlov12,Pavlov14}.

\section{The Poisson map from kinetic theory to fluid dynamics}\label{sec:Poissonmap}

In this section we describe the formulation of kinetic theory and fluid dynamics as noncanonical Hamiltonian systems \cite{Marsden84a, Morrison80, Marsden83, KhesinMisiolekModin20,Holm85}, and obtain a Poisson map between the underlying Poisson manifolds. 

In the following, $M$ is an oriented $n$-dimensional manifold, and $T^*M$ is its cotangent bundle. The assumption on orientability is inessential and can be removed by replacing all top-degree differential forms with smooth densities. Various smoothness and decay conditions are implicitly assumed. For physical applications, it suffices to take $M = \mathbb{T}^n$ (or $\mathbb{R}^n$) and $T^*M = \mathbb{T}^n \times \mathbb{R}^n$ (respectively $\mathbb{R}^{2n}$), where $M$ is endowed with the standard metric $\delta_{ij}$ and standard volume element $d^n q$.

\subsection{The Lie--Poisson bracket for kinetic theory}\label{subsec:LPforKT}

Consider the space of functions $C^\infty(T^*M)$ on the cotangent bundle $T^*M$, which is a Lie algebra under the canonical Poisson bracket $\{\cdot,\cdot\}$ on $T^*M$. Its dual space is the space of volume forms on $T^*M$, which we denote as $\Omega^{2n}(T^*M)$. We use the symplectic volume element $dV = \omega^n /n!$ to identify $C^\infty(T^*M)$ with $\Omega^{2n}(T^*M)$ by the one-to-one correspondence $f \mapsto f dV$ . The dual pairing
\begin{align}
\left\langle f dV, g\right\rangle_{C^\infty(T^*M)} = \int_{T^*M} dV \ fg = (f,g)
\end{align}
gives a weakly non-degenerate symmetric invariant bilinear form on the Lie algebra $C^\infty(T^*M)$ (see section \textbf{\ref{subsec:integration}}). We interpret the functions $f\in C^\infty(T^*M)$ as the $1$-particle distribution functions on phase space in kinetic theory. The ($+$)-Lie--Poisson bracket is
\begin{align}
\{ F,G\}_{KT} [f] = \int_{T^*M} dV \ f \left\{ \frac{\delta F}{\delta f}, \frac{\delta G}{\delta f} \right\},
\end{align}
where $F[f],G[f]$ are functionals of $f$. Given a Hamiltonian functional $H[f]$, the evolution equation for $f$ is the familiar Liouville equation:
\begin{align}\label{KT-EOM}
\frac{\partial f}{\partial t} = \left\{ \frac{\delta H}{\delta f}, f \right\}.
\end{align}
For $M = \mathbb{T}^n$ or $\mathbb{R}^n$, choosing the Hamiltonian functional
\begin{align}
H_{KT}[f] = \int d^nqd^np \ \frac{\lvert p \rvert^2}{2} f
\end{align}
recovers the collisionless Boltzmann equation for a dilute gas. The Vlasov--Poisson equation for an electrostatic plasma can be obtained from the Vlasov--Poisson Hamiltonian \cite{Marsden83, ZakharovKuznetsov97, Holm85}:
\begin{align}\label{HVlasov}
H_{VP}[f] = \int d^nqd^np \ \frac{\lvert p \rvert^2}{2} f(q,p) + \frac{e^2}{2} \int d^n q d^np d^n q^\prime d^n p^\prime f(q,p)f(q^\prime,p^\prime)G(q,q^\prime) ,
\end{align}
where $e$ is the nondimensionalised charge of the particles. We write the potential energy using the Green's function $G(q,q^\prime)$ for the Laplacian $-\nabla^2$ on $\mathbb{T}^n$ or $\mathbb{R}^n$. This avoids having to introduce an explicit equation for the electrostatic potential
\begin{align}
\varphi(q) = e \int d^n q^\prime d^n p^\prime \ f(q^\prime,p^\prime)G(q,q^\prime).
\end{align}
For a plasma on $\mathbb{T}^n$, we assume that there is a uniformly charged inert background to keep the plasma neutral overall. 

The Jeans equation for a self-gravitating stellar system in galactic dynamics can be obtain from a slightly different Hamiltonian \cite{BinneyTremaine11book,Jeans15}
\begin{align}\label{HJeans}
H_{SG}[f] = \int d^nqd^np \ \frac{\lvert p \rvert^2}{2} f(q,p) - \frac{\mathrm{G}}{2} \int d^n q d^np d^n q^\prime d^n p^\prime f(q,p)f(q^\prime,p^\prime)G(q,q^\prime) ,
\end{align}  
where $\mathrm{G}$ is the dimensionless gravitational constant. The Jeans equation is identical to the Vlasov--Poisson equation, except that $e^2$ has been replaced with $-\mathrm{G}$. The sign difference leads to qualitative differences between the respective solutions of the Vlasov--Poisson and Jeans equations \cite{Glassey96bookchapter4}. The self-gravitating Hamiltonian $H_{SG}[f]$ is not bounded from below unlike $H_{KT}[f]$ and $H_{VP}[f]$. This difference becomes important in section \textbf{\ref{subsec:globalM}}.

Since the distribution function $f(q,p)$ represents the density of particles in phase space, the appropriate configuration space for distribution functions in kinetic theory should be taken as the space of \emph{positive} distribution functions $C^\infty(T^*M)_+$, where
\begin{align}
C^\infty(T^*M)_+ = \left\{  f \in C^\infty(T^*M) : f(q,p)>0 \ \text{for all $(q,p)\in T^*M$} \right\}.
\end{align} 
The strict positivity of $f$ also means that we can make sense of expressions such as $f\log f$ and its $f$-derivatives without having to define them as limits as $f \rightarrow 0$. $C^\infty(T^*M)_+$ is a Poisson submanifold because the Liouville equation (\ref{KT-EOM}) preserves the condition $f(q,p)>0$ for all $(q,p)$ for any choice of Hamiltonian functional.

\subsection{The Lie--Poisson bracket for ideal fluid dynamics with non-constant entropy}\label{subsec:LPforfluids}

The underlying Poisson manifold of the noncanonical Hamiltonian formulation of ideal fluid dynamics is the dual space of a semidirect product Lie algebra. Let $A$ be a non-negative integer. Consider the semidirect product Lie algebra $\mathfrak{s}_A = \mathrm{Vect}(M) \ltimes C^\infty(M,\mathbb{R}^{A+1})$, where the vector fields $u\in \mathrm{Vect}(M)$ act on the $\mathbb{R}^{A+1}$-valued functions $(g_0(q),\ldots,g_A(q)) \in C^\infty(M,\mathbb{R}^{A+1})$ by componentwise Lie differentiation. We will not use the summation convention on $\mathbb{R}^{A+1}$ and instead display the indices $a=0,\ldots,A$ explicitly. We will work implicitly with a standard Euclidean basis on $\mathbb{R}^{A+1}$ throughout. The semidirect product Lie bracket on $\mathfrak{s}_A$ is
\begin{align}
\left[ \left(u, (g_0,\ldots g_A) \right), \left(v, (h_0,\ldots h_A)\right) \right]_{\mathfrak{s}_A} = \left( [u,v], \left( \mathcal{L}_u h_0 - \mathcal{L}_v g_0, \ldots, \mathcal{L}_u h_A - \mathcal{L}_v g_A  \right) \right).
\end{align}
The dual space $\mathfrak{s}^*_A$ of the Lie algebra $\mathfrak{s}_A$ is $\mathrm{Vect}^*(M) \times \Omega^n(M,\mathbb{R}^{A+1})$, where $\mathrm{Vect}^*(M) = \Gamma(T^*M \otimes \wedge^n T^*M)$ is the space of $1$-form densities on $M$, and $\Omega^n(M,\mathbb{R}^{A+1})$ is the space of $\mathbb{R}^{A+1}$-valued volume forms on $M$, which we can also think of as the space of $A+1$ real-valued volume forms by taking components. A typical element of $\mathfrak{s}^*_A$ can be written as $\left(m_idq^i \otimes d^nq, (s_0d^nq, \ldots, s_Ad^nq)\right)$, where $m = m_i dq^i \otimes d^nq$ is a $1$-form density, and $s_ad^nq$ are top-degree forms for $a = 0,\ldots,A$. The dual pairing of $\mathfrak{s}^*_A$ with $\mathfrak{s}_A$ is given by integration:
\begin{align}
\left\langle \left(m_i dq^i \otimes d^nq, (s_0d^nq, \ldots, s_Ad^nq)\right),  \left(u, (g_0,\ldots, g_A) \right) \right\rangle_{\mathfrak{s}_A} = \int_M d^n q \ \left(m_iu^i + \sum_{a=0}^A s_a g_a \right).
\end{align}
The ($-$)-Lie--Poisson bracket on $\mathfrak{s}^*_A$ is given, for functionals $F,G\in\mathcal{F}(\mathfrak{s}^*_A)$, by
\begin{align}\label{sA_LP}
\{F,G\}_{\mathfrak{s}^*_A} = - \left\langle m,  \left[\frac{\delta F}{\delta m},\frac{\delta G}{\delta m} \right] \right\rangle_{\mathrm{Vect}(M)} - \sum_{a=0}^A \left\langle s_a, \mathcal{L}_{\frac{\delta F}{\delta m}} \frac{\delta G}{\delta s_a} - \mathcal{L}_{\frac{\delta G}{\delta m}} \frac{\delta F}{\delta s_a} \right\rangle_{C^\infty(M)} ,
\end{align}
where the angle brackets denote the pairing between the vector space indicated by the subscript and its dual. For $A=1$, we can interpret $m$ as the \emph{momentum density} of a fluid, $s_0 = \rho$ as the \emph{mass density} of a fluid, and $s_1 = s$ as the \emph{entropy density} of a fluid. For $M = \mathbb{T}^n$ or $\mathbb{R}^n$, the Lie--Poisson bracket (\ref{sA_LP}) has the explicit form
\begin{align*}
\{F,G\}_{\mathfrak{s}^*_1} = - \int d^n q & \quad m_i \left(\frac{\delta F}{\delta m_j}\frac{\partial}{\partial q^j}\frac{\delta G}{\delta m_i} - \frac{\delta G}{\delta m_j}\frac{\partial}{\partial q^j}\frac{\delta F}{\delta m_i} \right) \\
& +s_0 \left(\frac{\delta F}{\delta m_j}\frac{\partial}{\partial q^j}\frac{\delta G}{\delta \rho} - \frac{\delta G}{\delta m_j}\frac{\partial}{\partial q^j}\frac{\delta F}{\delta \rho} \right) \\
& + s_1\left(\frac{\delta F}{\delta m_j}\frac{\partial}{\partial q^j}\frac{\delta G}{\delta s} - \frac{\delta G}{\delta m_j}\frac{\partial}{\partial q^j}\frac{\delta F}{\delta s} \right),
\end{align*}
which coincides with the bracket for ideal fluids with non-constant entropy \cite{Morrison80}. The Hamiltonian functional that generates the compressible Euler equations is
\begin{align}\label{Hfluids0}
H_{fluids}[m,\rho,s] = \int d^nq \ \left(\frac{\lvert m\rvert^2}{2\rho} + \rho U(\rho,s)\right),
\end{align}
where $U$ is the internal energy density of the fluid as a local function of $\rho$ and $s$, given by some equation of state.

We have chosen to work with a general $A$ instead of fixing $A=1$. This becomes useful in section \textbf{\ref{subsec:JA}}, where we show that that there is a family of Poisson maps $\mathcal{J}_A$ from $C^\infty(T^*M)_+$ to $\mathfrak{s}^*_A$ for $A = 0,1,2,\ldots$, with the property that $\mathcal{J}_A$ can be obtained by truncating $\mathcal{J}_B$ for some $B \geq A$. The hydrodynamic mass and entropy densities can thus be thought of as the first two members of a series of \emph{generalised entropies}, as we shall explain in section \textbf{\ref{subsec:JA}}.

\subsection{The Poisson map from $C^\infty(T^*M)_+$ to $\mathfrak{s}^*_A$}\label{subsec:JA}

Now we construct the map $\mathcal{J}_A: C^\infty(T^*M)_+ \rightarrow \mathfrak{s}^*_A$, which we will soon prove to be a Poisson map. It is defined by the dual pairing
\begin{align}\label{Jdef}
\left\langle \mathcal{J}_A[f], \left(u, (g_0,\ldots,g_A) \right) \right\rangle_{\mathfrak{s}_A} = \int_{T^*M} dV \ \left( \langle u,p\rangle f + \sum_{a=0}^A f(\log f)^a g_a \right).
\end{align}
In components, we have $\mathcal{J}_A[f] = (m_i[f]dq^i\otimes d^nq , (s_0[f] d^n q, \ldots, s_A[f] d^n q))$, where
\begin{align}\label{Jdefcpts}
m_i[f]  =  \int d^n p \ p_i f, \qquad s_a[f]  =  \int d^n p \ f(\log f)^a
\end{align}
are the \emph{momentum density} and the \emph{generalised entropy densities} of the distribution function, respectively, obtained by integrating along the cotangent fibres i.e. integrating over $p$. In particular, $s_0[f]$ is the mass density and $s_1[f]$ is the spatial density of the Boltzmann entropy $f\log f$. Here and henceforth we use the sign convention that the entropy density $s_1[f]$ is a convex function of $f$ -- the opposite sign convention is more common in physics.

Now we state the main result of this paper:
\begin{prop}\label{JPoissonprop}
The map $\mathcal{J}_A: C^\infty(T^*M)_+ \rightarrow \mathfrak{s}^*_A$ (\ref{Jdef}) is a Poisson map from the space of positive $1$-particle distribution functions $C^\infty(T^*M)_+$ (as defined in section \textbf{\ref{subsec:LPforKT}}) to the space $\mathfrak{s}^*_A$ of hydrodynamic variables with generalised entropies (as defined in section \textbf{\ref{subsec:LPforfluids}}).
\end{prop}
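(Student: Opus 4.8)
The plan is to reduce the verification of (\ref{JPoisson}) to the case of \emph{linear} functionals on $\mathfrak{s}^*_A$, namely $F_\eta[\mu] = \langle \mu,\eta\rangle_{\mathfrak{s}_A}$ for $\eta = (u,(g_0,\ldots,g_A)) \in \mathfrak{s}_A$, and then compute both sides explicitly. The reduction is legitimate because a Poisson bracket is a biderivation: both sides of (\ref{JPoisson}) evaluated at a fixed $f$ depend on $F,G$ only through their first functional derivatives at $\mathcal{J}_A[f]$ — on the right because $\{\cdot,\cdot\}_{\mathfrak{s}^*_A}$ is tensorial, and on the left because the chain rule writes $\delta(F\circ\mathcal{J}_A)/\delta f$ in terms of $\delta F/\delta m$ and $\delta F/\delta s_a$ at $\mathcal{J}_A[f]$; since the $F_\eta$ with $\eta = (\delta F/\delta m,(\delta F/\delta s_a))|_{\mathcal{J}_A[f]}$ have the same first derivative, it suffices to treat them. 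Writing $\chi_a(f) := f(\log f)^a$, the chain rule and (\ref{Jdef}) give, for a linear functional,
\begin{align*}
\frac{\delta(F_\eta\circ\mathcal{J}_A)}{\delta f} = \langle u,p\rangle + \sum_{a=0}^A \chi_a'(f)\,\pi^*g_a, \qquad \chi_a'(f) = (\log f)^a + a(\log f)^{a-1}.
\end{align*}

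Calling this $X$, and $Y$ the analogous object for $\zeta = (v,(h_0,\ldots,h_A))$, I would substitute into $\{F_\eta\circ\mathcal{J}_A,F_\zeta\circ\mathcal{J}_A\}_{KT}[f] = \int_{T^*M} dV\, f\{X,Y\}$, expand $\{X,Y\}$ by $\mathbb{R}$-bilinearity and the Leibniz identity (\ref{Leibniz}) into a vector-field--vector-field term, vector-field--entropy cross terms, and entropy--entropy terms, and treat them in turn. The vector-field--vector-field term is handled immediately by (\ref{qpPB1}), yielding $-\langle m[f],[u,v]\rangle$ after the fibre integration. The two workhorses for the rest are: the chain-rule identity (\ref{ydiff}), which turns $\{\langle u,p\rangle,\chi_b'(f)\}$ into $\chi_b''(f)\{\langle u,p\rangle,f\}$ and makes $\{\chi_a'(f),\chi_b'(f)\}=0$; and the integration-by-parts identities on $T^*M$ from section \textbf{\ref{subsec:integration}}, together with $\{\pi^*g,\pi^*h\}=0$ from (\ref{qpPB3}).

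The linchpin is the algebraic identity $f\,\chi_b''(f) = b\,\chi_{b-1}'(f)$ — equivalently, the Euler operator $f\,d/df$ carries the $\chi_a'(f)$ into one another, which is precisely the eigenfunction connection advertised in section \textbf{\ref{subsec:JA}}, and it is what makes the entropy densities $s_a[f]$ with $a\le A$ close up. Using it, I would show: (i) in the vector-field--entropy cross terms, the piece from $\chi_b'(f)\{\langle u,p\rangle,\pi^*h_b\} = -\chi_b'(f)\pi^*\mathcal{L}_u h_b$ has $p$-integral against $f$ equal to $(s_b[f]+b\,s_{b-1}[f])\mathcal{L}_u h_b$ rather than $s_b[f]\mathcal{L}_u h_b$, but the spurious $b\,s_{b-1}[f]$ is exactly cancelled by the piece from $\{\langle u,p\rangle,\chi_b'(f)\}\pi^*h_b = \chi_b''(f)\{\langle u,p\rangle,f\}\pi^*h_b$ after one integration by parts on $T^*M$ and use of $f\chi_b''(f)=b\chi_{b-1}'(f)$ and (\ref{qpPB2}); summing over $a,b$ and adding the $F\leftrightarrow G$ terms leaves $-\langle s[f],\mathcal{L}_u h - \mathcal{L}_v g\rangle$; and (ii) the entropy--entropy terms integrate to zero, because after moving $f$ through the bracket and using $\{\pi^*g,\pi^*h\}=0$ and $\{\chi_b'(f),f\}=0$ the integrand becomes $\pi^*g_a\,\pi^*(\partial_i h_b)\,\partial_{p_i}\Theta_{ab}(f)$ with $\Theta_{ab}$ an $f$-antiderivative of $\chi_a'\chi_b'$, which is a $p$-divergence with $p$-independent coefficients and hence vanishes on fibrewise integration by parts. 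Assembling, $\{F_\eta\circ\mathcal{J}_A,F_\zeta\circ\mathcal{J}_A\}_{KT}[f] = -\langle\mathcal{J}_A[f],[\eta,\zeta]_{\mathfrak{s}_A}\rangle = \{F_\eta,F_\zeta\}_{\mathfrak{s}^*_A}(\mathcal{J}_A[f])$, the signs matching because $\{\cdot,\cdot\}_{KT}$ is the $(+)$ and $\{\cdot,\cdot\}_{\mathfrak{s}^*_A}$ the $(-)$ Lie--Poisson bracket, and the minus on the left is produced by the anti-homomorphism signs in (\ref{qpPB1})--(\ref{qpPB3}).

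The main obstacle is that $\mathcal{J}_A$ is genuinely nonlinear once $A\ge 1$: it is \emph{not} the adjoint of any fixed Lie-algebra (anti)homomorphism $\mathfrak{s}_A\to C^\infty(T^*M)$ — the natural candidate $(u,(g_a))\mapsto \langle u,p\rangle + \sum_a g_a(\log f)^a$ depends on $f$ — so the shortcut \enquote{the adjoint of a Lie-algebra homomorphism is a Poisson map} is unavailable and the identity must be forced through by the direct computation above. Within that computation the delicate point is the bookkeeping in step (i): arranging the chain-rule and integration-by-parts manipulations so that the lower-order entropy contributions $b\,s_{b-1}[f]$ cancel exactly, which is where the Euler-operator closure identity is indispensable. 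One also has to be careful to state the decay/smoothness hypotheses on $f$ needed to discard boundary terms in both the symplectic and the fibrewise integrations by parts (e.g.\ enough to make $f(\log f)^k$ decay at $|p|\to\infty$).
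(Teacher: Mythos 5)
Your proposal is correct and follows essentially the same route as the paper's proof: compute $\delta(F\circ\mathcal{J}_A)/\delta f$ by the chain rule, split the kinetic-theory bracket into vector--vector, vector--entropy and entropy--entropy blocks, use the Euler-operator recurrences $x y_a'(x)=y_a(x)+a y_{a-1}(x)$ and $x y_a''(x)=a y_{a-1}'(x)$ together with (\ref{ydiff}) and integration by parts on $T^*M$ to cancel the spurious $a\,s_{a-1}[f]$ contributions in the cross terms, and reduce the entropy--entropy terms to exact $p$-divergences that integrate to zero. Your preliminary reduction to linear functionals and your slightly different antiderivative bookkeeping in the entropy--entropy block (the paper uses $I_{ab}(x)=\int_0^x w\,y_a'(w)y_b''(w)\,dw$ and the vanishing of $\{\delta F/\delta s_a,\delta G/\delta s_b\}$) are cosmetic variations on the same argument.
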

The case for $A=0$ is known in \cite{Guillemin80, Marsden83} and can be immediately deduced from the fact that Lie algebra $\mathfrak{s}_0 = \mathrm{Vect}(M) \ltimes C^\infty(M)$ acts on $T^*M$ by Hamiltonian vector fields generated by the associated fibrewise affine function. The novelty here is that the Poisson map can be extended to account for the non-constant entropy bracket \cite{Morrison80}, and that $\mathcal{J}_A$ maps the $p$-integral of the Boltzmann entropy onto the hydrodynamic entropy density. We also find that there is family of spaces with hydrodynamic variables, all of which arises naturally from kinetic theory, that includes fluids with constant entropy as its first member and fluids with non-constant entropy as its second member. Taking the $p$-integral of the Boltzmann entropy to obtain a hydrodynamic entropy density is not a new idea, even within the framework of noncanonical Hamiltonian systems \cite{Grmela14}. In Porpotision \ref{JPoissonprop}, we settle the status of $s_1[f]$ as a component of a Poisson map. This offers a partial passage from kinetic theory to fluid dynamics by offering a Poisson map between the respective configuration spaces. However, as we shall see in section \textbf{\ref{sec:Hamiltonian}}, the reduction of noncanonical Hamiltonian systems described in section \textbf{\ref{sec:physics}} cannot be carried out exactly, because the kinetic-theory Hamiltonian does not factor through the Poisson map.

\begin{proof}
Define the functions $y_a(x): \mathbb{R}_{+} \rightarrow \mathbb{R}$ for $a=0,1,2,\ldots$ as
\begin{align}\label{yadef}
y_a(x) = x(\log x)^a.
\end{align} 
We set $(\log x)^0 = 1$ by convention so that $y_0(x)=x$, and also set $y_{a} = 0$ for $a < 0$. These functions have removable singularities at $x=0$ for $a \geq 1$. Each generalised entropy density $s_a[f]$ can be written in terms of the $y_a$ as $s_a[f]= \int d^n p \ y_a(f)$.
The functions $y_a(x)$ satisfy the recurrence relations
\begin{align}\label{yarecur1}
x \frac{d y_a}{dx} - y_a &= a y_{a-1}, \\
x \frac{d^2 y_a}{dx^2} &= a \frac{d y_{a-1}}{dx}.\label{yarecur2}
\end{align}
Conversely, any other set of functions $\tilde{y}_a$ that solves (\ref{yarecur1}) with $\tilde{y}_0(x)=x$ is of the form
\begin{align}\label{yaCa-b}
\tilde{y}_a(x) = y_a(x) + \sum_{b=0}^{a-1} C_{a-b} \frac{a!}{b!} y_b(x),
\end{align}
where $C_1,\ldots,C_A$ are arbitrary constants.

Now suppose we have functionals $\tilde{F}, \tilde{G} \in \mathcal{F}(C^\infty(T^*M)_+)$ such that $\tilde{F}=F\circ \mathcal{J}_A$ and $\tilde{G}=G\circ \mathcal{J}_A$ for some $F,G \in \mathcal{F}(\mathfrak{s}^*_A)$. To prove that
\begin{align}
\left\{ \tilde{F}, \tilde{G} \right\}_{KT} = \left\{ F, G \right\}_{\mathfrak{s}_A^*} \circ \mathcal{J}_A,
\end{align}
we need to compute the functional derivative of $\tilde{F}=F\circ \mathcal{J}_A$. The functional chain rule gives
\begin{align}
\left\langle \frac{\delta \tilde{F}}{\delta f},\delta f\right\rangle_{C^\infty(T^*M)} = \left\langle \frac{\delta F}{\delta m}, \delta m[f]\right\rangle_{\mathrm{Vect}(M)} + \sum_{a=0}^A\left\langle \frac{\delta F}{\delta s_a}, \delta s_a[f]\right\rangle_{C^\infty (M)},
\end{align}
and
\begin{align}
\delta m[f] = d^n q\int d^n p \ p \delta f, \qquad \qquad
\delta s_a[f] = d^n q\int d^n p \ y^\prime_a(f)\delta f.
\end{align}
Comparing both expressions gives
\begin{align}
\frac{\delta \tilde{F}}{\delta f} = \left\langle \frac{\delta F}{\delta m}, p \right\rangle + \sum_{a=0}^A y^\prime_a(f) \frac{\delta F}{\delta s_a}.
\end{align}
We have suppressed the pullback $\pi^*$ of the cotangent bundle projection for notational simplicity. Here ${\delta F}/{\delta m}$ is a vector field on $M$, and ${\delta F}/{\delta s_a}$ is a function on $M$ for each $a$, so they all depend on $q$ only. Now we compute
\begin{align}\label{JAstep1}
\left\{ \tilde{F}, \tilde{G} \right\}_{KT} = & \int d^n q d^n p \ f \left\{ \frac{\delta \tilde{F}}{\delta f}, \frac{\delta \tilde{G}}{\delta f} \right\} \nonumber \\
= & \int d^nq d^n p \ f \left\{\left\langle \frac{\delta F}{\delta m}, p \right\rangle, \left\langle \frac{\delta G}{\delta m}, p \right\rangle  \right\} \nonumber \\
& \qquad + \sum_{a=1}^A  f \left\{y^\prime_a(f)\frac{\delta F}{\delta s_a}, \left\langle \frac{\delta G}{\delta m}, p \right\rangle  \right\} + \sum_{b=1}^A f \left\{\left\langle \frac{\delta F}{\delta m}, p \right\rangle, y^\prime_b(f)\frac{\delta G}{\delta s_b}  \right\} \nonumber \\
& \qquad + \sum_{a,b=1}^A f \left\{y^\prime_a(f)\frac{\delta F}{\delta s_a}, y^\prime_b(f)\frac{\delta G}{\delta s_b}  \right\}.
\end{align}
Using (\ref{qpPB1}), the term on the first line of (\ref{JAstep1}) is
\begin{align}
\int d^n q d^n p \ f \left\{\left\langle \frac{\delta F}{\delta m}, p \right\rangle, \left\langle \frac{\delta G}{\delta m}, p \right\rangle  \right\} 
=& - \int d^n q d^n p \ f \left \langle \left[ \frac{\delta F}{\delta m}, \frac{\delta G}{\delta m}\right], p\right\rangle,
\nonumber \\
=& -\left\langle m[f],  \left[ \frac{\delta F}{\delta m}, \frac{\delta G}{\delta m}\right] \right\rangle_{\mathrm{Vect}(M)}.
\end{align}
The second line of (\ref{JAstep1}) contains two similar groups of terms. We compute one of them as follows:
\begin{align}
&\int d^n q d^n p \ f \left\{y^\prime_a(f)\frac{\delta F}{\delta s_a}, \left\langle \frac{\delta G}{\delta m}, p \right\rangle  \right\}& \nonumber \\
= & \int d^n q d^n p \ \left(
f y_a^\prime(f) \left\{ \frac{\delta F}{\delta s_a}, \left\langle \frac{\delta G}{\delta m}, p \right\rangle\right\}
+ f y_a^{\prime\prime}(f) \frac{\delta F}{\delta s_a}\left\{ f, \left\langle \frac{\delta G}{\delta m}, p \right\rangle \right\}
\right), &\nonumber \\
= & \int d^n q d^n p  \left( 
\left(y_a(f) + ay_{a-1}(f) \right) \left\{ \frac{\delta F}{\delta s_a}, \left\langle \frac{\delta G}{\delta m}, p \right\rangle\right\}
+ \frac{\delta F}{\delta s_a}\left\{ a y_{a-1}(f), \left\langle \frac{\delta G}{\delta m}, p \right\rangle \right\}
\right), &\nonumber \\
 = & \int d^n q d^n p  \left( 
\left(y_a(f) + ay_{a-1}(f) \right) \left\{ \frac{\delta F}{\delta s_a}, \left\langle \frac{\delta G}{\delta m}, p \right\rangle\right\}
-  a y_{a-1}(f)\left\{\frac{\delta F}{\delta s_a}, \left\langle \frac{\delta G}{\delta m}, p \right\rangle \right\}
\right), &\nonumber \\
= & \int d^n q d^n p \ y_a(f) \left( \mathcal{L}_{\frac{\delta G}{\delta m}} \frac{\delta F}{\delta s_a} \right) = \left\langle s_a[f], \mathcal{L}_{\frac{\delta G}{\delta m}} \frac{\delta F}{\delta s_a} \right\rangle_{C^\infty (M)}. &
\end{align}
We have used the identity (\ref{ydiff}) as well as the recurrence relations (\ref{yarecur1},\ref{yarecur2}) of the functions $y_a(x) = x(\log x)^a$ to manipulate terms involving $y_a(f)$ and its derivatives in and out of the canonical Poisson bracket. The terms on the last line of (\ref{JAstep1}) vanish, because
\begin{align}\label{JAlaststep}
& \int d^n q d^n p \ f \left\{y^\prime_a(f)\frac{\delta F}{\delta s_a}, y^\prime_b(f)\frac{\delta G}{\delta s_b}  \right\} &\nonumber \\
 = & \int d^n q d^n p \ f \Bigg(
y^\prime_a(f)y^\prime_b(f)\left\{\frac{\delta F}{\delta s_a} , \frac{\delta G}{\delta s_b} \right\}
+ y^\prime_a(f)\frac{\delta G}{\delta s_b} \left\{ \frac{\delta F}{\delta s_a},y^\prime_b(f) \right\} &\nonumber \\
& \qquad \qquad \qquad \qquad
+ \frac{\delta F}{\delta s_a} y^\prime_b(f)\left\{y^\prime_a(f) , \frac{\delta G}{\delta s_b}\right\}
+ \frac{\delta F}{\delta s_a}\frac{\delta G}{\delta s_b}\left\{y^\prime_a(f) ,y^\prime_b(f) \right\}
\Bigg) &\nonumber \\
= & \int d^n q d^n p \ f \left(y^\prime_a(f)\frac{\delta G}{\delta s_b} \left\{ \frac{\delta F}{\delta s_a},y^\prime_b(f) \right\}+ \frac{\delta F}{\delta s_a} y^\prime_b(f)\left\{y^\prime_a(f) , \frac{\delta G}{\delta s_b}\right\}  \right) &\nonumber \\
= & \int d^n q d^n p \ \left(
f y^\prime_a(f)y^{\prime\prime}_b(f)  \frac{\delta G}{\delta s_b}\left\{ \frac{\delta F}{\delta s_a},f \right\}
+ f y^{\prime\prime}_a(f)y^\prime_b(f) \frac{\delta F}{\delta s_a}\left\{f , \frac{\delta G}{\delta s_b}\right\} 
\right) &\nonumber \\
= & \int d^n q d^n p \ \left(
\frac{\delta G}{\delta s_b}\left\{ \frac{\delta F}{\delta s_a},I_{ab}(f) \right\}
+  \frac{\delta F}{\delta s_a}\left\{I_{ba}(f) , \frac{\delta G}{\delta s_b}\right\}
\right) &\nonumber \\
= & -\int d^n q d^n p \ \left( I_{ab}(f) + I_{ba}(f)\right) \left\{\frac{\delta F}{\delta s_a} , \frac{\delta G}{\delta s_b} \right\} = 0.&
\end{align}
We have $ \{{\delta F}/{\delta s_a} , {\delta G}/{\delta s_b} \} = 0$ because both arguments are functions of $q$ only. The function $I_{ab}(x)$ that we have introduced on the second-to-last line of (\ref{JAlaststep}) is an antiderivative to $xy^\prime_a(x)y^{\prime\prime}_b(x)$:
\begin{align} \label{Iabdef}
I_{ab}(x) = \int_0^x dw \ w y^\prime_a(w)y^{\prime\prime}_b(w).
\end{align} 
We can express $xy^\prime_a(x)y^{\prime\prime}_b(x)$ as a polynomial in $\log x$, so the integral (\ref{Iabdef}) converges and $I_{ab}(x)$ is a linear combination of $x(\log x)^c$ for suitable integers $c$.

Putting everything together, we have
\begin{align}
\left\{ \tilde{F}, \tilde{G} \right\}_{KT} = &-\left\langle m[f],  \left[ \frac{\delta F}{\delta m}, \frac{\delta G}{\delta m}\right] \right\rangle_{\mathrm{Vect}(M)} \nonumber \\
& \qquad - \sum_{a=0}^A\left\langle s_a[f], \mathcal{L}_{\frac{\delta F}{\delta m}} \frac{\delta G}{\delta s_a} - \mathcal{L}_{\frac{\delta G}{\delta m}} \frac{\delta F}{\delta s_a} \right\rangle_{C^\infty (M)},
\end{align}
which is precisely the ($-$)-Lie--Poisson bracket on $\mathfrak{s}^*_A$ (\ref{sA_LP}), evaluated at the image of the Poisson map $\mathcal{J}_A[f] = (m[f],(s_0[f],\ldots,s_A[f]))$.
\end{proof}
The Poisson manifolds $C^\infty(T^*M)_+$ and $\mathfrak{s}^*_A$ are linear, in the sense that they are open subsets of vector spaces and that the Poisson brackets are linear. However, the Poisson map $\mathcal{J}_A$ is genuinely non-linear.
It is not induced by a Lie algebra action of $\mathfrak{s}_A$ on $T^*M$ by Hamiltonian vector fields for $A \geq 1$. 
The identity $\int dV \ \{f,g\} = 0$ is not needed for the $A=0$ case but is needed for the $A \geq 1$ cases. 

Nonetheless, the non-linearities in $\mathcal{J}_A$ are not completely arbitrary. They arise naturally from considering the \emph{generalised eigenvalue-eigenvector problem} \cite{Artin2014book} of the Euler operator $x(d/dx)$. For each $a=0,\ldots,A$, the functions $y_0(x),\ldots,y_a(x)$ form an ordered basis for the generalised eigenspace $V_1^a = \ker (x(d/dx)-1)^a$. The operator $x(d/dx)$ preserves the chain of inclusions $V_1^0\subset V_1^1 \subset \ldots \subset V_1^A$; in particular, $(x(d/dx)-1)V_1^a \subset V_1^{a-1}$. The non-linear components of $\mathcal{J}_A$ correspond to the generalised eigenvectors $y_1(x),\ldots,y_A(x)$, which is a natural extension of the ordinary eigenvector $y_0(x)$ that gives a linear component.

\begin{remark}
The true dimensionless expression for $\log f$ is $\log(h^n f)$, where $h$ is an arbitrary \enquote{Planck's constant} that defines an action ($[qp]$) scale. 
Changes in the reference action scale leads to transformations of the form $\log x \mapsto \log x + \Lambda$ in the definitions (\ref{yadef}) of $y_a(x)$.
Let us also allow for a different change of scale for each occurrence of $\log x$ in (\ref{yadef}), for example $x(\log x)^2$ to $x(\log x + \Lambda_1)(\log x + \Lambda_2)$. 
The general form of such a transformation is 
\begin{align}\label{yaLambda}
\tilde{y}_a = y_a(x) + \sum_{b<a} y_b(x) \Lambda_{ba},
\end{align}
where $\Lambda_{ba}$ is a strictly upper-triangular $(A+1)\times(A+1)$ matrix of constants, i.e. $\Lambda_{ba} = 0$ if $b \geq a$. The transformation (\ref{yaLambda}) includes (\ref{yaCa-b}) as a special case. The induced transformation $s_a[f] \mapsto \tilde{s}_a[f]$ on the generalised entropy densities is
\begin{align}\label{saLambda}
\tilde{s}_a = s_a + \sum_{b<a} s_b \Lambda_{ba},
\end{align}
which is a Poisson automorphism of $\mathfrak{s}^*_A$ i.e. an invertible Poisson map from $\mathfrak{s}^*_A$ onto itself. The transformation (\ref{saLambda}) adds constant multiples of the lower generalised entropy densities to the higher generalised entropy densities. The $a=1$ case of (\ref{saLambda}) reflects that the specific entropy $\eta_1 = s_1/s_0$ is only uniquely specified up to a global constant in classical thermodynamics.
\end{remark}

\begin{remark}
It is of interest whether $\mathcal{J}_A: C^\infty(T^*M)_+ \rightarrow \mathfrak{s}^*_A$ is the momentum map for a Hamiltonian group action on $C^\infty(T^*M)_+$ by the semidirect product Lie group $\mathrm{Diff}(M) \ltimes C^\infty(M,\mathbb{R}^{A+1})$, where $\mathrm{Diff}(M)$ is the group of diffeomorphisms on $M$. 
For $A=0$, it is known that $\mathcal{J}_0$ is a momentum map for a Hamiltonian $\mathrm{Diff}(M) \ltimes C^\infty(M)$-action on $C^\infty(T^*M)_+$, induced by cotangent lifts of $\varphi \in \mathrm{Diff}(M)$ to $T^*M$ and momentum shifts $(q,p) \mapsto (q,p-dh)$ for $h \in C^\infty(M)$, both acting on $T^*M$ as Hamiltonian symplectomorphisms \cite{Marsden83, GuilleminSternberg84book, Guillemin80}.
However, there is no corresponding group action when $A \geq 1$. Consider the restriction of the pullback $\mathcal{J}_A^*:\mathcal{F}(\mathfrak{s}_A^*)\rightarrow \mathcal{F}(C^\infty(T^*M)_+)$ to linear functionals on $\mathfrak{s}_A^*$ that arise from dual pairing with elements in $\mathfrak{s}_A$. We denote the restriction by $\mathcal{J}_A^\dagger: \mathfrak{s}_A \rightarrow \mathcal{F}(C^\infty(T^*M)_+)$, given explicitly as
\begin{align}\label{Jdaggerdef}
\mathcal{J}_A^\dagger\left(u, (g_0,\ldots,g_A) \right) [f] = \left\langle \mathcal{J}_A[f], \left(u, (g_0,\ldots,g_A) \right) \right\rangle_{\mathfrak{s}_A},
\end{align}
for $(u, (g_0,\ldots,g_A) ) \in \mathfrak{s}_A$. The map $\mathcal{J}_A^\dagger$ (\ref{Jdaggerdef}) is an anti-homomorphism of Lie algebras because $\mathcal{J}_A$ is a Poisson map to $\mathfrak{s}_A^*$, which is equipped with the ($-$)-Lie--Poisson bracket. This leads to an $\mathfrak{s}_A$-action on $C^\infty(T^*M)_+$ by the Hamiltonian vector fields of $\mathcal{J}_A^\dagger(u, (g_0,\ldots,g_A) )[f]$, whose flow is
\begin{align}\label{Jdaggerflow}
\frac{\partial f}{\partial t} = \left\{\langle  u, p\rangle, f  \right\} + \sum_{a=0}^A \left\{   g_a, f(\log f)^a \right\},
\end{align}
where again we have suppressed the pullback $\pi^*$ of the cotangent bundle projection.

For $A \geq 1$, there are problematic flows from the abelian subalgebra $C^\infty(M,\mathbb{R}^{A+1})\subset \mathfrak{s}_A$ that cannot be integrated to one-parameter groups, which prevents us from integrating the Lie algebra action (\ref{Jdaggerflow}) to a group action, so $\mathcal{J}_A$ cannot be the momentum map of a $\mathrm{Diff}(M) \ltimes C^\infty(M,\mathbb{R}^{A+1})$-action.
We illustrate this with the example $M = S^1$ and $A = 1$. Let $(q,p)$ be coordinates on $T^*S^1 \simeq S^1 \times \mathbb{R}$. The Hamiltonian flow by $\mathcal{J}_1^\dagger (0, (0,g_1(q)))[f]$ on $ C^\infty(T^*S^1)_+$ is then
\begin{align}\label{JdaggerBurgers}
\frac{\partial f}{\partial t} - g_1^\prime(q)\frac{\partial f \log f}{\partial p} = 0, \quad \text{or} \quad \frac{\partial \Phi}{\partial t} - \left(g_1^\prime(q)+g_1^\prime(q)\Phi \right)\frac{\partial \Phi}{\partial p} = 0,
\end{align}
for $\Phi = \log f$. The basepoint $q\in S^1$ only enters parametrically. Fixing some $q$ at which $g_1^\prime(q) \neq 0$, the $\Phi$-equation in (\ref{JdaggerBurgers}) restricted to the fibre $T^*_q S^1$ can be rewritten as an inviscid Burgers' equation under an affine change of variables. A smooth solution to (\ref{JdaggerBurgers}) with smooth initial data $\Phi(t=0)$ breaks down at finite time
\begin{align}
\tau_b = \left(\mathrm{max} \left\{ g_1^\prime(q) \frac{\partial \Phi(q,p,t=0)}{\partial p} : {p\in T^*_qS^1} \right\} \right)^{-1}
\end{align}
if $g_1^\prime \partial \Phi(t=0) /\partial p > 0$, or equivalently if $g_1^\prime \partial f(t=0)/\partial p > 0$, for some $p$ \cite{Whitham74book, LeVeque92book}. 
If $f$ is smooth, non-constant and $f\rightarrow 0$ as $\lvert p \rvert \rightarrow \infty$, then $\partial f/\partial p$ must take both signs at different $p$, since $\int_{-\infty}^\infty dp \ \partial f/\partial p = 0$, so the breakdown of smooth solutions is unavoidable.
Thus the Hamiltonian flow (\ref{JdaggerBurgers}) by $\mathcal{J}_1^\dagger (0, (0,g_1(q)))$ on $ C^\infty(T^*S^1)_+$ only exists for finite time near each $f$; moreover, the existence time can be made indefinitely short by introducing large $p$-gradients in $f$.

On general manifolds $M$ and $A\geq 1$, the flow of $\mathcal{J}_A^\dagger (0, (0,g_1(q), 0,\ldots ,0))$ on $ C^\infty(T^*M)_+$ is of the form (\ref{Jdaggerflow}) along the direction spanned by $dg_1(q)$ on the fibre $T^*_qM$, with the other directions entering only parametrically. So we encounter the same problem and conclude that the Hamiltonian Lie algebra action (\ref{Jdaggerflow}) cannot arise from a group action for $A \geq 1$. The map $\mathcal{J}_A$ corresponds to a Hamiltonian Lie algebra action, but it is not a momentum map for a Hamiltonian group action.
\end{remark}
 
\subsection{A one-parameter family of Poisson maps from $C^\infty(T^*M)_+$ to $\mathfrak{s}^*_0$}\label{subsec:Jeps}

In this section we construct the Poisson map $\mathcal{J}_A$ (\ref{Jdef}) in a different way, by considering it as a Taylor series truncation of a one-parameter family of Poisson maps $\mathsf{J}_{\xi}:C^\infty(T^*M)_+ \rightarrow \mathfrak{s}^*_0$. These maps are defined by the dual pairing
\begin{align}\label{Jepsdef}
\left\langle \mathsf{J}_{\xi}[f], (u,g) \right \rangle_{\mathfrak{s}_0} = \int_{T^*M} dV \ \left( \langle u,p\rangle f + f^{1+\xi} g \right).
\end{align}
In components, we have $\mathsf{J}_{\xi}[f] = (m_i[f] dq^i \otimes d^n q,\rho_{\xi}[f] d^n q)$, where
\begin{align}\label{Jepsdefcpts}
m_i[f]  =  \int d^n p \ p_i f, \qquad \rho_{\xi}[f]  =  \int d^n p \ f^{1+\xi}.
\end{align}
The quantity $\rho_{\xi}$ is the spatial density of the so-called \emph{Tsallis entropy} \cite{Tsallis88} up to a multiplicative constant. For sufficiently regular and decaying $f$, the integrals in the definition of $\mathsf{J}_{\xi}[f]$ are finite for all $\xi$ in some interval $I$ containing $0$. Note that $\rho_{\xi}$ is not a fractional density despite being a $p$-integral of $f^{1+\xi}$ -- in terms of the density $\psi = f dV$, the expression to be integrated is $\psi (\psi/dV)^\xi$.
\begin{prop}\label{Jepsprop}
The one-parameter family of maps $\mathsf{J}_{\xi}:C^\infty(T^*M)_+ \rightarrow \mathfrak{s}^*_0$ (\ref{Jepsdef}) is a Poisson map for each $\xi \in I$, where $I$ is some appropriately chosen interval on $\mathbb{R}$ containing $0$.
\end{prop}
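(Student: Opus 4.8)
The plan is to mimic the computation in the proof of Proposition \ref{JPoissonprop}, but with the single nonlinearity $y(x) = x^{1+\xi}$ replacing the family $y_a(x) = x(\log x)^a$. First I would observe that $\mathsf{J}_{\xi}$ is essentially $\mathcal{J}_0$ with the mass density $\rho = s_0[f] = \int d^np\, f$ replaced by $\rho_{\xi}[f] = \int d^np\, y(f)$, where now $y(x) = x^{1+\xi}$. The functional chain rule then gives, for $\tilde F = F \circ \mathsf{J}_{\xi}$,
\begin{align}
\frac{\delta \tilde F}{\delta f} = \left\langle \frac{\delta F}{\delta m}, p \right\rangle + y'(f)\,\frac{\delta F}{\delta \rho_{\xi}},
\end{align}
with $\delta F/\delta m$ a vector field and $\delta F/\delta \rho_{\xi}$ a function on $M$, both depending on $q$ only. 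Substituting into $\{\tilde F,\tilde G\}_{KT} = \int dV\, f\{\delta\tilde F/\delta f,\delta\tilde G/\delta f\}$ produces exactly the four groups of terms as in (\ref{JAstep1}): the $\langle\cdot,p\rangle$--$\langle\cdot,p\rangle$ term, which reproduces $-\langle m[f],[\delta F/\delta m,\delta G/\delta m]\rangle_{\mathrm{Vect}(M)}$ verbatim via (\ref{qpPB1}); the two cross terms; and the $y'(f)$--$y'(f)$ term.

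The key algebraic input is the recurrence relation that replaces (\ref{yarecur1}): for $y(x) = x^{1+\xi}$ one has the Euler-operator identity
\begin{align}
x\,y'(x) - y(x) = \xi\, y(x), \qquad x\, y''(x) = \xi\, y'(x),
\end{align}
i.e. $y$ is a genuine (not generalised) eigenfunction of $x(d/dx)$ with eigenvalue $1+\xi$. Using this together with the Leibniz rule (\ref{Leibniz}) and the identity (\ref{ydiff}), the cross term $\int dV\, f\{y'(f)\tfrac{\delta F}{\delta \rho_{\xi}},\langle\tfrac{\delta G}{\delta m},p\rangle\}$ collapses: the $f y''(f)$ piece becomes $\xi\, y(f)$ times a bracket, which cancels against the $\xi\, y(f)$ produced by writing $f y'(f) = (1+\xi)y(f)$, leaving exactly $\int dV\, y(f)\,\mathcal{L}_{\delta G/\delta m}\tfrac{\delta F}{\delta \rho_{\xi}} = \langle \rho_{\xi}[f],\mathcal{L}_{\delta G/\delta m}\tfrac{\delta F}{\delta \rho_{\xi}}\rangle_{C^\infty(M)}$, using $\int dV\,\{f,h\}=0$. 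The symmetric cross term gives the matching $-\langle\rho_{\xi}[f],\mathcal{L}_{\delta F/\delta m}\tfrac{\delta G}{\delta\rho_{\xi}}\rangle$. For the last term, since $\{\delta F/\delta\rho_{\xi},\delta G/\delta\rho_{\xi}\}=0$ (both depend on $q$ only), expanding by Leibniz leaves $\int dV\,f\,\bigl(y'(f)\tfrac{\delta G}{\delta\rho_{\xi}}\{\tfrac{\delta F}{\delta\rho_{\xi}},y'(f)\} + y'(f)\tfrac{\delta F}{\delta\rho_{\xi}}\{y'(f),\tfrac{\delta G}{\delta\rho_{\xi}}\}\bigr)$; by (\ref{ydiff}) this equals $\int dV\,\bigl(\tfrac{\delta G}{\delta\rho_{\xi}}\{\tfrac{\delta F}{\delta\rho_{\xi}},I(f)\} + \tfrac{\delta F}{\delta\rho_{\xi}}\{I(f),\tfrac{\delta G}{\delta\rho_{\xi}}\}\bigr)$ where $I$ is an antiderivative of $x(y'(x))y''(x) = \xi(1+\xi)^2 x^{1+2\xi}/(1+2\xi)\cdot(\dots)$ — explicitly a multiple of $x^{1+2\xi}$; integrating by parts and using antisymmetry of the Poisson bracket shows this term is $-2\int dV\, I(f)\{\tfrac{\delta F}{\delta\rho_{\xi}},\tfrac{\delta G}{\delta\rho_{\xi}}\} = 0$. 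Collecting the surviving terms yields precisely the $(-)$-Lie--Poisson bracket (\ref{sA_LP}) with $A=0$ evaluated at $\mathsf{J}_{\xi}[f] = (m[f],\rho_{\xi}[f])$, which is the claim.

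The main obstacle is analytic rather than algebraic: one must specify the interval $I \ni 0$ and the function class so that all the integrals — $\rho_{\xi}[f] = \int d^np\, f^{1+\xi}$, and crucially the antiderivative terms involving $\int d^np\, f^{1+2\xi}$ and the functional derivatives $y'(f) = (1+\xi)f^{\xi}$ paired against test functions — converge and the integrations by parts in $p$ produce no boundary contributions. Since $f>0$ and is assumed to decay (with all derivatives) as $|p|\to\infty$, and since $f$ is bounded on compact-in-$q$ regions, $f^{1+\xi}$ and $f^{1+2\xi}$ are integrable for all $\xi$ in a neighbourhood of $0$ (one needs $1+\xi>0$ and $1+2\xi>0$, plus enough decay; for Schwartz-class or Maxwellian-tailed $f$ this holds on, say, $I = (-1/2,\infty)$ or any bounded subinterval thereof). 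One should also note that $\mathsf{J}_{\xi}$ maps into the open subset of $\mathfrak{s}^*_0$ where $\rho_{\xi}>0$, and that $C^\infty(T^*M)_+$ is preserved, exactly as in section \ref{subsec:LPforKT}. I would state these decay hypotheses explicitly at the outset and then present the algebraic cancellation, which is a verbatim one-parameter specialisation of the $A=0$, nonlinear-$y$ mechanism already exhibited in the proof of Proposition \ref{JPoissonprop}.
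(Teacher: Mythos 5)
Your proposal is correct and follows essentially the same route as the paper's proof in Appendix \ref{App:proofs}: the functional chain rule, the splitting into the same three groups of terms, the cancellation in the cross terms via $fy'(f)=(1+\xi)y(f)$, $fy''(f)=\xi y'(f)$ and the identity (\ref{ydiff}), and the vanishing of the last term via an antiderivative proportional to $f^{1+2\xi}$ together with $\{\delta F/\delta\rho_\xi,\delta G/\delta\rho_\xi\}=0$ — the paper merely writes these manipulations with explicit powers of $f$ rather than in terms of $y(x)=x^{1+\xi}$. Your added remarks on the choice of the interval $I$ and the decay hypotheses are consistent with (and slightly more explicit than) the paper's treatment.
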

The proof can be found in appendix \textbf{\ref{App:proofs}} and is very similar to that of Proposition \ref{JPoissonprop}.
The function $x^{1+\xi}$, which is the form of the non-linearity in the map $\mathsf{J}_{\xi}$, is precisely the eigenfunction of the Euler differential operator $x(d/dx)$ with eigenvalue $1 + \xi$. The non-linearities in $\mathcal{J}_A$ (see section \textbf{\ref{subsec:JA}}) can thus be seen as an infinitesimal version of those in $\mathsf{J}_{\xi}$ as $\xi \rightarrow 0$. More precisely, the maps $\mathcal{J}_A$ can be obtained from expanding $\mathsf{J}_{\xi}$ in a Taylor series about $\xi = 0$ and then truncating. To see this, consider the Taylor expansion
\begin{align}\label{f1pxigen}
f^{1+\xi} = f \exp\left( \xi \log f \right) = \sum_{a=0}^\infty \frac{\xi^a}{a!} f \left(\log f \right)^a .
\end{align}
Thus $f^{1+\xi}$ is a generating function for the $y_a(f)$ defined in (\ref{yadef}). Integrating (\ref{f1pxigen}) over $p$ gives
\begin{align}
\rho_{\xi}[f] = \sum_{a=0}^\infty \frac{\xi^a}{a!} s_a[f],
\end{align}
so the truncated Taylor expansion of $\mathsf{J}_{\xi}[f]$ at $\xi=0$ of order $A$ gives the components of $\mathcal{J}_A[f]$ (\ref{Jdefcpts}). This argument shows that we can obtain $\mathcal{J}_A$ from the Poisson map $\mathsf{J}_{\xi}$, but it does not directly show that $\mathcal{J}_A$ is a Poisson map.
In the following we present a more geometrical approach from which $\mathcal{J}_A$ is obtained as the composition of Poisson maps, hence manifestly a Poisson map.
 
Consider the \emph{path space $PC^\infty(M)$ of one-parameter families of functions on $M$}. The elements of $PC^\infty(M)$ are maps $g_\xi: I \rightarrow C^\infty(M)$. The Lie algebra of vector fields $\mathrm{Vect}(M)$ has a representation on $PC^\infty(M)$ by acting on the image. More explicitly, the action of a vector field $u$ on $g_\xi$ is $g_\xi \mapsto \mathcal{L}_u g_\xi$. This gives rise to a semidirect product Lie algebra $\mathrm{Vect}(M) \ltimes PC^\infty(M)$.

We take the smooth dual space to this semidirect product Lie algebra to be $\mathrm{Vect}(M)^* \times P\Omega^n(M)$, where $P\Omega^n(M)$ is the \emph{path space $P\Omega^n(M)$ of top-degree forms (or densities) on $M$}, i.e. the space of one-parameter families of $n$-forms $\rho_\xi: I \rightarrow \Omega^n(M)$. The dual pairing between $P\Omega^n(M)$ and $PC^\infty(M)$ is given by integration over $I \times M$, where $I$ is equipped with the standard measure $d\xi$. The one-parameter family of Poisson maps $\mathsf{J}_{\xi}$ in Proposition \ref{Jepsprop} can be considered as a single map $\mathsf{J}:C^\infty(T^*M)_+ \rightarrow  \mathrm{Vect}(M)^* \times P\Omega^n(M)$, defined by
\begin{align}\label{Jbigdef}
\left\langle \mathsf{J}[f], (u,g_\xi) \right\rangle =  \int_{T^*M} dV \ \left( \langle u,p\rangle f + \int_I d\xi \ \left(f^{1+\xi} g_\xi\right) \right) .
\end{align}

\begin{prop}\label{Jbigrop}
The map $\mathsf{J}:C^\infty(T^*M)_+ \rightarrow \mathrm{Vect}(M)^* \times P\Omega^n(M)$ (\ref{Jbigdef}) is a Poisson map.
\end{prop}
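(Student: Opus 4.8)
The plan is to run the same computation as in the proof of Proposition \ref{JPoissonprop}, but carrying the parameter $\xi$ through the calculation rather than expanding it in a Taylor series. The key observation is that the non-linearity $x^{1+\xi}$ is an honest eigenfunction of the Euler operator $x(d/dx)$ with eigenvalue $1+\xi$, so the analogue of the recurrence relations (\ref{yarecur1})--(\ref{yarecur2}) is simply the pair of identities $x\,\frac{d}{dx}(x^{1+\xi}) = (1+\xi)x^{1+\xi}$ and $x\,\frac{d^2}{dx^2}(x^{1+\xi}) = (1+\xi)\,\frac{d}{dx}(x^{1+\xi})$. These are even cleaner than the $y_a$ recurrences because there is no shift in the index: the same eigenvalue $1+\xi$ reappears. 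Consequently all the algebraic cancellations in the proof of Proposition \ref{JPoissonprop} go through with $a$ and $a-1$ both replaced by the single multiplier $1+\xi$.

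Concretely, I would first compute the functional derivative of a pulled-back functional. If $\tilde F = F \circ \mathsf{J}$ for a functional $F$ on $\mathrm{Vect}(M)^* \times P\Omega^n(M)$, the functional chain rule together with $\delta \rho_\xi[f] = d^nq \int d^n p \ (1+\xi) f^\xi \, \delta f$ gives
\begin{align}
\frac{\delta \tilde F}{\delta f} = \left\langle \frac{\delta F}{\delta m}, p \right\rangle + \int_I d\xi \ (1+\xi) f^\xi \frac{\delta F}{\delta \rho_\xi},
\end{align}
where $\delta F/\delta \rho_\xi$ is a function on $M$ for each $\xi$, hence depends on $q$ only. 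Plugging this into the kinetic-theory bracket $\{\tilde F, \tilde G\}_{KT} = \int dV \ f\{\delta\tilde F/\delta f, \delta\tilde G/\delta f\}$ produces four groups of terms exactly as in (\ref{JAstep1}): the pure-$m$ term, two mixed $m$--$\rho_\xi$ terms, and the pure-$\rho_\xi$--$\rho_{\xi'}$ terms (now carrying a double integral $\int_I d\xi \int_I d\xi'$). The pure-$m$ term reproduces $-\langle m[f],[\delta F/\delta m, \delta G/\delta m]\rangle$ via (\ref{qpPB1}). For a mixed term I would expand the Poisson bracket via the Leibniz rule, use (\ref{ydiff}) to pull $f^\xi$ through, and then apply the eigenfunction identities: the term $f \cdot (1+\xi)f^\xi \{\delta F/\delta\rho_\xi, \langle \delta G/\delta m, p\rangle\}$ combines with $f \cdot (1+\xi)f^{\xi-1}\cdot\xi \cdot \ldots$ — more precisely, writing everything in terms of $f^{1+\xi}$ and using $x(f^\xi)' = \xi f^\xi$ and $x(f^{1+\xi})' = (1+\xi)f^{1+\xi}$, the derivative-of-$f$ contributions cancel against the extra multiple exactly as the $ay_{a-1}$ terms cancelled before, leaving $\int dV \ f^{1+\xi}\big(\mathcal{L}_{\delta G/\delta m}\,\delta F/\delta\rho_\xi\big) = \langle \rho_\xi[f], \mathcal{L}_{\delta G/\delta m}\,\delta F/\delta\rho_\xi\rangle$, with the $\xi$-integral reinstated. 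The pure-$\rho$ terms vanish by the same antiderivative trick as in (\ref{JAlaststep}): one introduces $I_{\xi\xi'}(x) = \int_0^x dw \ w (w^\xi)'(w^{\xi'})''\cdot(\text{constants})$, integrates by parts in phase space using $\int dV\{f,h\} = 0$, and finds the result is proportional to $\int dV \ (I_{\xi\xi'}(f) + I_{\xi'\xi}(f))\{\delta F/\delta\rho_\xi, \delta F/\delta\rho_{\xi'}\}$, which is zero because the two functional derivatives are functions of $q$ alone. Assembling the pieces gives precisely the ($-$)-Lie--Poisson bracket of $\mathrm{Vect}(M) \ltimes PC^\infty(M)$ evaluated at $\mathsf{J}[f]$, which is what the Poisson-map property (\ref{JPoisson}) demands.

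The main obstacle is analytic rather than algebraic: one must be careful that all the $\xi$-integrals over $I$ converge and that differentiation under the integral sign (in both $p$ and $\xi$) is justified. This requires choosing $I$ and the class of admissible $f$ so that $f^{1+\xi}$, $f^\xi$, $f^\xi\log f$ and the antiderivatives $I_{\xi\xi'}(f)$ all have enough decay in $p$ uniformly for $\xi$ in a neighbourhood of $I$; for $M = \mathbb{T}^n$ or $\mathbb{R}^n$ with $f$ Schwartz-class and bounded below on compact sets by a Gaussian-type profile this is routine, but it should be flagged. A secondary point to check is that the functional derivative $\delta F/\delta\rho_\xi$ makes sense as an element paired with $PC^\infty(M)$, i.e. that $F$ is differentiable in the path-space sense; since we only need the identity (\ref{JPoisson}) to hold for sufficiently regular functionals, this is part of the implicit hypotheses. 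Apart from these bookkeeping issues, the proof is a direct transcription of the proof of Proposition \ref{JPoissonprop} with the discrete index $a$ replaced by the continuous parameter $\xi$ and the recurrences (\ref{yarecur1})--(\ref{yarecur2}) replaced by the eigenfunction identities for $x^{1+\xi}$.
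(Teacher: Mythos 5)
Your proposal is correct and follows essentially the same route as the paper: the paper's appendix proof computes the same functional derivative, splits the bracket into the same groups of terms (zero, one, and two $\xi$-integrals), handles the mixed terms with the identity (\ref{ydiff}) applied to the eigenfunction $f^{1+\xi}$ exactly as you describe, and kills the double-integral term by the same antiderivative/integration-by-parts trick, the antiderivative of $w(w^{1+\xi})'(w^{1+\xi'})''$ being just a multiple of $w^{1+\xi+\xi'}$. The only differences are cosmetic (the paper routes the computation through Proposition \ref{Jepsprop} rather than directly through Proposition \ref{JPoissonprop}, and does not introduce the $I_{\xi\xi'}$ notation explicitly), plus your added remarks on convergence of the $\xi$-integrals, which the paper leaves implicit.
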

The proof, which is a small variation of that for Proposition \ref{Jepsprop}, can be found in appendix \textbf{\ref{App:proofs}}.
Having constructed the Poisson map $\mathsf{J}$, we proceed to construct the second Poisson map that would give $\mathcal{J}_A$ when composed to $\mathsf{J}$. This second map arises from the theory of \emph{jet spaces}, which codify Taylor expansions in a geometrical manner \cite{Omohundro86book, Kolar93book}. Consider the sequence of projections
\begin{align}\label{jettower}
& J_0^0(I, \Omega^n(M)) \leftarrow J_0^1(I, \Omega^n(M)) \leftarrow \ldots  \leftarrow J_0^\infty(I, \Omega^n(M))  \leftarrow P \Omega^n(M),
\end{align} 
where $J_0^A(I,\Omega^n(M))$ is the space of \emph{$A$-jets of maps $\rho_\xi: I\rightarrow \Omega^n(M)$ at $\xi = 0$}, defined as the quotient space of $P \Omega^n(M)$ under the following equivalence relation: $\rho_\xi \sim \tilde{\rho}_\xi$ if and only if the Taylor expansion of $\rho_\xi$ and $\tilde{\rho}_\xi$ at $\xi=0$ agree up to order $A$. The \emph{infinite jet space} $J_0^\infty(I, \Omega^n(M))$ is defined similarly, with the equivalent relation being the agreement of Taylor expansions up to arbitrary order. Each projection $J_0^{A} \rightarrow J_0^{B}$ for $A>B$ in (\ref{jettower}) corresponds to a further truncation of the Taylor expansion of $\rho_\xi$.

To simplify notation, we use the shorthand $\mathrm{V}$ for $\mathrm{Vect}(M)$ and omit the arguments in the jet spaces for the rest of the section. We would like to show that the sequence of projections 
\begin{align}\label{jettower2}
\mathrm{V}^* \times J_0^0 \leftarrow \mathrm{V}^* \times J_0^1 \leftarrow \ldots \leftarrow \mathrm{V}^* \times  J_0^\infty
 \leftarrow \mathrm{V}^* \times P \Omega^n(M) 
\end{align}
are Poisson maps between Poisson manifolds, so that composing $\mathsf{J}$ with these projections corresponds to truncating the Taylor series of $\rho_\xi$, hence giving the Poisson map $\mathcal{J}_A$. This can be achieved by considering the spaces in (\ref{jettower2}) as smooth duals of Lie algebras and the maps as adjoints of injective Lie algebra homomorphisms. Consider the space of $C^\infty(M)$-valued distributions on $I$, which we denote by $D(I,C^\infty(M))$ or simply $D$ for brevity. The Lie algebra $\mathrm{V}$ acts on $D$ by Lie differentiation on its values: $g_\xi \mapsto \mathcal{L}_u g_\xi$, where $u$ is a vector field and $g_\xi$ is a $C^\infty(M)$-valued distribution. The subspace $D^A_0$ of distributions that are supported at $0$ and are of order at most $A$ is invariant under this action. An element of $D^A_0$ can always be written in the form 
\begin{align}\label{gepsexample}
g_\xi = \sum_{a=0}^{A} g_a (-1)^a \frac{d^a}{d \xi^a} \delta(\xi), \quad \text{where } g_a \in C^\infty(M).
\end{align}
The subspace $D^\infty_0 = \cup_{a=0}^\infty D^a_0$ is also invariant. The inclusions of these invariant subspaces induces a sequence of injective homomorphisms between semidirect product Lie algebras:
\begin{align}\label{jettower3}
\mathrm{V} \ltimes D_0^0 \xhookrightarrow{} \mathrm{V} \ltimes D_0^1 \xhookrightarrow{} \ldots \xhookrightarrow{} \mathrm{V} \ltimes  D_0^\infty  \xhookrightarrow{} \mathrm{V} \ltimes D.
\end{align}
We recover the sequence (\ref{jettower2}) from dualising the sequence of injective homomorphisms between Lie algebras (\ref{jettower3}). This endows each space in (\ref{jettower2}) with a Lie--Poisson structure, and makes each map in (\ref{jettower2}) a Poisson map by being the adjoint of a Lie algebra homomorphism.

Now we can explain the connection between $\mathsf{J}$ and $\mathcal{J}_A$. Consider the isomorphism $D^A_0 \simeq C^\infty(M, \mathbb{R}^{A+1})$ given by the map $g_\xi \mapsto (g_0,\ldots, g_A)$, where the $g_a$ are as defined in (\ref{gepsexample}). The action of $\mathrm{V}$ on $D^A_0$ corresponds to componentwise Lie differentiation on $C^\infty(M, \mathbb{R}^{A+1})$ under this isomorphism. We can thus identify $\mathrm{V} \ltimes D_0^A$ with $\mathfrak{s}_A$ as semidirect product Lie algebras (see section \textbf{\ref{subsec:LPforfluids}}), and hence identify $\mathrm{V}^* \times J_0^A$ with $\mathfrak{s}^*_A$ as Poisson manifolds. The Poisson map $\mathcal{J}_A$ (\ref{Jdef}) is then recovered from composing $\mathsf{J}$ (\ref{Jbigdef}) with the sequence of projections (\ref{jettower2}). Indeed, the pairing of $\rho_{\xi}[f]$ with $g_\xi$ in (\ref{gepsexample}) is
\begin{align}
\left\langle \rho_\xi[f], g_\xi \right\rangle_{PC^\infty(M)} = &\int_I d\xi \ \left\langle  \rho_\xi[f], g_\xi\right\rangle_{C^\infty(M)}, \nonumber \\
= & \int_I d\xi \ \sum_{a=0}^A (-1)^a \frac{d^a \delta(\xi)}{d \xi^a}  \left\langle \rho_\xi[f], g_a \right\rangle_{C^\infty(M)}, \nonumber \\ 
 = &\sum_{a=0}^{A}\left\langle  s_a[f], g_a\right\rangle_{C^\infty(M)}.
\end{align}
The Poisson map $\mathsf{J}$ can be truncated at or expanded around other values of $\xi$ with a similar reasoning as above, by considering distributions on $I$ that are supported at a finite number of points $\xi_1,\ldots,\xi_k$, with prescribed (finite) order $A_j$ at each point $\xi_j$. This creates a Poisson map from $C^\infty(T^*M)_{+}$ to $\mathfrak{s}^*_A$, where $A+1 = \sum_{j=1}^k (A_j+1)$ is the number of $C^\infty(M)$-degrees of freedom.

In particular, if we consider distributions on $I$ of order zero supported at the \emph{non-negative integers $\xi = 0,\ldots A$}, we obtain a Poisson map $\mathcal{J}_A^{pol}: C^\infty(T^*M)_{+} \rightarrow\mathfrak{s}^*_A$, given explicitly as
\begin{align}\label{Jpoldef}
\mathcal{J}_A^{pol}[f] = \left(m_i [f] dq^i \otimes d^n q,(\rho^{pol}_0[f] d^n q,\ldots, \rho^{pol}_A[f] d^n q)\right),
\end{align}
where
\begin{align}
m_i[f] =  \int d^n p \ p_i f, \qquad \rho^{pol}_{a}[f]  =  \int d^n p \ f^{1+a}.
\end{align}
The condition $f>0$ can be relaxed for the map $\mathcal{J}_A^{pol}$, since positive integer powers of $f$ are uniquely defined even when $f$ is not positive. Thus we can extend the domain of $\mathcal{J}_A^{pol}$ to all of $C^\infty(T^*M)$. Moreover, since $\mathcal{J}_A^{pol}$ can be defined using only integer powers of $f$, the pullback $(\mathcal{J}_A^{pol})^* F$ of a \emph{polynomial} functional $F$ on $\mathfrak{s}_A^*$ is a polynomial functional on $C^\infty(T^*M)$.

\section{Fluid dynamics as a Hamiltonian approximation to kinetic theory}\label{sec:Hamiltonian}
Following the discussion on noncanonical Hamiltonian systems in section \textbf{\ref{sec:physics}}, to obtain a reduced Hamiltonian system from a more primitive one, we first seek a Poisson map between the underlying Poisson manifolds of the two systems, then attempt to show that the primitive Hamiltonian functional factors through the Poisson map. We have produced the relevant Poisson map $\mathcal{J}_1$ in section \textbf{\ref{sec:Poissonmap}}. In this section, we analyse the relevant Hamiltonian functionals in $1$-particle kinetic theory and hydrodynamics with non-constant entropy. We will find that this reduction cannot be carried out exactly because the kinetic-theory Hamiltonian does not factor through any of the Poisson maps we have constructed in section \textbf{\ref{sec:Poissonmap}}. However, we can obtain an approximate Hamiltonian that factors through $\mathcal{J}_1$ by ignoring the contribution from the entropy of the distribution relative to its local Maxwellian.
In the following, we take $M = \mathbb{T}^n$ or $\mathbb{R}^n$ for simplicity and suppress the volume element $d^n q$ from the notation. The discussion in this section (except \textbf{\ref{subsec:globalM}}) can be generalised to arbitrary Riemannian manifolds without difficulty. We will also revert to the more conventional notation for the mass density $\rho = s_0$ and the entropy density $s = s_1$, but keep the mathematician's sign convention for the entropy.

Consider a $1$-particle distribution function $f=f(q,p)$ that describes the state of a monatomic ideal gas. We suppress the time dependence on $f$ throughout, since $t$ only enters parametrically. We define the quantities
\begin{align}
\rho[f] & = \int d^n p \ f(q,p), & \\
m[f] & = \int d^n p \ p f(q,p), & u[f] = \frac{m[f]}{\rho[f]}, \\
s[f] & = \int d^n p \ f(q,p)\log f(q,p), \label{sdef}\\
\theta[f] & = \frac{1}{n\rho} \int d^n p \ \lvert p-u\rvert^2 f(q,p), 
\end{align}
Here $(m[f],\rho[f],s[f])$ are the components of the Poisson map $\mathcal{J}_1: C^\infty(T^*M)_+ \rightarrow \mathfrak{s}^*_1$, which we interpret as the momentum density, mass density and entropy density respectively, $u[f]$ is the mean velocity, and $\theta[f]$ is the \emph{kinetic theory temperature}. These quantities are all \emph{hydrodynamic variables} or \emph{fluid moments}, i.e. fields depending only on $q$ obtained from certain $p$-integrals involving $f$. In terms of these fluid moments, the spatial density of the energy of the ideal gas can be split into a mean part $\rho u^2/2$ corresponding to the kinetic energy of the mean motion of the particles, and a thermal part $n\rho\theta/2$ corresponding to the kinetic energy of the motion of the particles relative to that mean. The splitting reads
\begin{align}\label{HKT_idealgas}
H_{KT}[f] = \int d^n q d^n p \ \frac{\lvert p \rvert^2}{2} f = \int d^n q \ \left(\frac{\lvert m[f]\rvert^2}{2\rho[f]} + \frac{n}{2}\rho[f]\theta[f]\right).
\end{align}

A similar splitting also appears in the hydrodynamic description of the gas. The total energy density of the fluid (\ref{Hfluids0}) is split into a kinetic part $m^2/2\rho$ and a thermal part $\rho U(\rho,s)$, where $U(\rho, s)$ is the internal energy per unit mass of the fluid as determined by a thermodynamic equation of state. For a monatomic ideal gas, we have $U = (n/2)T$, where $T(\rho,s)$ is the \emph{thermodynamic temperature} of the gas, defined implicitly by
\begin{align}\label{EOS_idealgas_raw}
\frac{s}{\rho} = \log\left( \rho T^{-{n}/{2}} \right) + C.
\end{align}
Here $s /\rho$ is the local specific entropy, and $C$ is an arbitrary constant. The Hamiltonian functional that generates the compressible Euler equations for a monatomic ideal gas is
\begin{align}\label{Hfluids_idealgas}
H_{fluids}[m,\rho,s] = \int d^n q \ \left(\frac{\lvert m\rvert^2}{2\rho} + \frac{n}{2}\rho T(\rho,s)\right),
\end{align}
where $T(\rho,s)$ is obtained from rearranging (\ref{EOS_idealgas_raw}).

Comparing (\ref{HKT_idealgas}) and (\ref{Hfluids_idealgas}), both Hamiltonians contain a kinetic part $m^2/2\rho$ and a thermal part. The kinetic parts agree under the Poisson map $\mathcal{J}_1$, so it is tempting to hope that the kinetic-theory temperature $\theta[f]$ can be described in terms of $(m[f],\rho[f],s[f])$, preferably through the equation of state $T(\rho,s)$. If this holds, $H_{KT}$ would be the pullback of $H_{fluids}$ through the Poisson map $\mathcal{J}_1$, so that the reduction of noncanonical Hamiltonian systems described in section \textbf{\ref{sec:physics}} can be carried out exactly. 

Unfortunately, this exact reduction is not possible. The extent to which it fails is quantified by the deviation of the distribution function $f$ from its associated \emph{local Maxwellian} $f_m$, which is defined as
\begin{align}\label{localMdef}
f_m(q,p) & = \frac{\rho}{(2\pi\theta)^{n/2}}\exp\left( -\frac{\lvert p-u\rvert^2}{2\theta} \right).
\end{align}
The local Maxwellian $f_m$ is so constructed to have the same mass density, mean velocity and kinetic-theory temperature as $f$. The dependence of $f_m$ on $f$ is highly non-linear. Substituting $f_m$ into (\ref{sdef}) and rearranging gives
\begin{align}\label{theta0}
\theta[f]  = \frac{\rho[f]^{2/n}}{2\pi\mathrm{e}}\exp\left( - \frac{2}{n} \frac{s[f_m]}{\rho[f]} \right),
\end{align}
where $\mathrm{e} = \exp(1)$. On the other hand, choosing $C = -(n/2)\log(2\pi\mathrm{e})$ in (\ref{EOS_idealgas_raw}) and rearranging gives
\begin{align}\label{EOS_idealgas}
T(\rho[f],s[f]) = \frac{\rho[f]^{2/n}}{2\pi \mathrm{e}}\exp\left(-\frac{2}{n} \frac{s[f]}{\rho[f]} \right).
\end{align}
The expressions (\ref{theta0}) and (\ref{EOS_idealgas}) are remarkably similar, with the only difference being that $s[f]$ appears in (\ref{theta0}) while $s[f_m]$ appears in (\ref{EOS_idealgas}). This difference can also be expressed in terms of the relative entropy density of $f$ against $f_m$.

We define the \emph{relative entropy density} of $f_1$ against $f_2$ as \cite{Cercignani88,SaintRaymond09book}
\begin{align}\label{relentloc}
r[f_1 \vert f_2] = \int d^n p \ \left( f_1 \log(f_1/f_2) + f_1 - f_2 \right),
\end{align}
where $f_1$ and $f_2$ are positive distribution functions. Here $r[f_1 \vert f_2]$ is a non-negative function of $q$, with equality at fixed $q_0$ if and only if $f_1(q_0,p) = f_2(q_0,p)$.
We can show that $s[f]-s[f_m] = r[f \vert f_m]$ using properties of the local Maxwellian \cite{SaintRaymond09book}, so
\begin{align}\label{theta2}
\theta[f] = T\left(\rho[f],s[f_m]\right) = T\left(\rho[f],s[f]\right) \exp\left( \frac{2}{n} \frac{r[f \vert f_m]}{\rho[f]} \right).
\end{align} 
From the non-negativity of $r[f \vert f_m]$ and (\ref{theta2}), we see that $\theta[f]\geq T(\rho[f],s[f])$ with equality if and only if $f=f_m$. Putting everything together, we have:
\begin{prop}\label{cannotreduceprop}
The kinetic-theory Hamiltonian $H_{KT}[f]$ from (\ref{HKT_idealgas}), defined on the space $C^\infty(T^*M)_{+}$ of positive $1$-particle distribution functions, is related to the fluid Hamiltonian $H_{fluids}[m,\rho,s]$ from (\ref{Hfluids_idealgas}), defined on the space $\mathfrak{s}^*_1$ of hydrodynamic variables for an ideal compressible fluid with non-constant entropy by
\begin{align}\label{cannotreduce}
H_{KT} = \mathcal{J}_1^*H_{fluids} + \Delta H,
\end{align}
where $\mathcal{J}_1:C^\infty(T^*M)_{+} \rightarrow \mathfrak{s}^*_1$ is the Poisson map described in (\ref{Jdef}) with $A=1$, and
\begin{align}\label{deltaHdef}
\Delta H[f] = \frac{n}{2}\int d^n q \ \rho[f] \theta[f] \left(1 - \exp\left(-\frac{2}{n} \frac{r[f \vert f_m]}{\rho[f]} \right) \right).
\end{align} 
This functional satisfies the inequality $\Delta H[f]\geq 0$, with equality if and only if the distribution function $f$ is a local Maxwellian everywhere, i.e. $f(q,p)=f_m(q,p)$.
\end{prop}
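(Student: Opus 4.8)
The proof assembles the identities already established in this section, so the plan is mainly to organise the bookkeeping. First I would compute the right-hand side of (\ref{cannotreduce}) explicitly. Substituting the components $(m[f],\rho[f],s[f])$ of $\mathcal{J}_1[f]$ into (\ref{Hfluids_idealgas}) gives
\begin{align}
\mathcal{J}_1^* H_{fluids}[f] = \int d^n q \ \left(\frac{\lvert m[f]\rvert^2}{2\rho[f]} + \frac{n}{2}\rho[f]\,T(\rho[f],s[f])\right),
\end{align}
while the splitting (\ref{HKT_idealgas}) writes $H_{KT}[f] = \int d^n q \ \bigl(\lvert m[f]\rvert^2/(2\rho[f]) + (n/2)\rho[f]\theta[f]\bigr)$. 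The kinetic terms are identical, so (\ref{cannotreduce}) reduces to showing that, pointwise in $q$, the quantity $\tfrac{n}{2}\rho[f]\bigl(\theta[f]-T(\rho[f],s[f])\bigr)$ equals the integrand of $\Delta H$ in (\ref{deltaHdef}). This is just (\ref{theta2}) rearranged: writing $T(\rho[f],s[f]) = \theta[f]\exp\bigl(-\tfrac{2}{n}r[f\vert f_m]/\rho[f]\bigr)$, the quantity above becomes $\tfrac{n}{2}\rho[f]\theta[f]\bigl(1-\exp(-\tfrac{2}{n}r[f\vert f_m]/\rho[f])\bigr)$, which is exactly the integrand of (\ref{deltaHdef}); integrating over $q$ gives (\ref{cannotreduce}).

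The substantive input is therefore (\ref{theta2}), whose only non-trivial ingredient is the relative-entropy identity $s[f]-s[f_m] = r[f\vert f_m]$ together with the Maxwellian formulas (\ref{theta0}) and (\ref{EOS_idealgas}). I would flag this identity as the one point that genuinely uses the explicit form of $f_m$: since $\log f_m$ is a polynomial of degree at most two in $p$ and $f_m$ is constructed to share the mass density, momentum density and the second moment $\int d^n p \ \lvert p-u\rvert^2 f$ of $f$, one has $\int d^n p \ f\log f_m = \int d^n p \ f_m\log f_m$ and $\int d^n p \ f = \int d^n p \ f_m$, so that (\ref{relentloc}) collapses to $r[f\vert f_m] = \int d^n p \ (f\log f - f_m\log f_m) = s[f]-s[f_m]$. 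Everything else is substitution of the definitions recorded in this section, valid under the standing smoothness and decay assumptions that make the relevant $p$-integrals finite.

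It remains to establish the sign. At each $q$ one has $\rho[f]>0$ because $f>0$, and $\theta[f]>0$ because it is a strictly positive average of $\lvert p-u\rvert^2$ against the positive density $f$. The relative entropy density satisfies $r[f\vert f_m](q)\geq 0$, so $1-\exp\bigl(-\tfrac{2}{n}r[f\vert f_m]/\rho[f]\bigr)\geq 0$ pointwise; hence the integrand of $\Delta H$ is non-negative and $\Delta H[f]\geq 0$. For the equality case, the integrand is a continuous non-negative function of $q$, so $\Delta H[f]=0$ forces it to vanish identically; since $\rho[f]$ and $\theta[f]$ are strictly positive this forces $r[f\vert f_m](q)=0$ for every $q$, and by the equality statement accompanying (\ref{relentloc}) this is equivalent to $f(q,p)=f_m(q,p)$ for all $(q,p)$. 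I do not expect a genuine obstacle here: apart from the Maxwellian identity just discussed, the result is a direct consequence of relations already in place.
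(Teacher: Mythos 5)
Your proposal is correct and follows essentially the same route as the paper: the proposition is assembled from the splitting (\ref{HKT_idealgas}), the identity (\ref{theta2}), and the non-negativity of the relative entropy density, exactly as you do. The one detail you add — deriving $s[f]-s[f_m]=r[f\vert f_m]$ from the fact that $\log f_m$ is spanned by $1$ and $\lvert p-u\rvert^2$ and that $f$ and $f_m$ share those moments — is the standard argument the paper delegates to a citation, and it is correct.
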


A similar result holds for the Vlasov--Poisson equation. The extra mean-field potential in the Hamiltonian functional $H_{VP}[f]$ (\ref{HVlasov}) depends on $\rho[f]$ only, so it also factors through the Poisson map $\mathcal{J}_1$. If we ignore the relative entropy of the distribution function against its local Maxwellian, the Vlasov--Poisson system reduces to the Euler--Poisson equations \cite{DietzSandor99, ZakharovKuznetsov97}, with non-constant entropy.
\begin{corollary}\label{cannotreducecor}
The Vlasov--Poisson Hamiltonian $H_{VP}[f]$ from (\ref{HVlasov}) can be written as
\begin{align}
H_{VP} = \mathcal{J}_1^*\left(H_{fluids}+H_{static}\right) + \Delta H,
\end{align}
where $H_{fluids}[m,\rho,s]$ is the fluid Hamiltonian (\ref{Hfluids_idealgas}) for an uncharged monatomic ideal gas, $\Delta H$ is as in (\ref{deltaHdef}), and $H_{static}[\rho]$ is the electrostatic energy of a charged fluid, defined by
\begin{align}
H_{static}[\rho] = \frac{{e}^2}{2}\int d^nq d^nq^{\prime} \rho(q)\rho(q^{\prime})G(q,q^{\prime}),
\end{align}
where $G(q,q^\prime)$ is the Green's function of the Laplacian.
\end{corollary}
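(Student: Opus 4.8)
The plan is to reduce this to Proposition \ref{cannotreduceprop} by observing that the only term in $H_{VP}$ that is not already present in $H_{KT}$ is a functional of the mass density $\rho[f]$ alone, and is therefore automatically a pullback under $\mathcal{J}_1$.

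First I would split $H_{VP}[f] = H_{KT}[f] + H_{pot}[f]$, where $H_{pot}[f] = \frac{e^2}{2}\int d^nq\, d^np\, d^nq'\, d^np'\; f(q,p) f(q',p') G(q,q')$ is the mean-field electrostatic term appearing in (\ref{HVlasov}). Since $G(q,q')$ carries no momentum dependence, Fubini (justified by the standing decay assumptions) lets me perform the $p$ and $p'$ integrations first, using $\int d^np\, f(q,p) = \rho[f](q)$, to obtain $H_{pot}[f] = \frac{e^2}{2}\int d^nq\, d^nq'\; \rho[f](q)\,\rho[f](q')\,G(q,q') = H_{static}[\rho[f]]$. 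Because $\rho[f] = s_0[f]$ is one of the components of $\mathcal{J}_1[f] = (m[f],(\rho[f],s[f]))$ and $H_{static}$ depends on its argument only through $\rho$, this says precisely that $H_{pot} = \mathcal{J}_1^* H_{static}$; i.e. the electrostatic term factors through $\mathcal{J}_1$. This is the step that in this problem replaces the delicate analysis of the kinetic-energy term in Proposition \ref{cannotreduceprop}: the potential carries no momentum dependence at all, so no local Maxwellian and no relative entropy are generated.

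Finally I would invoke Proposition \ref{cannotreduceprop} in the form $H_{KT} = \mathcal{J}_1^* H_{fluids} + \Delta H$, with $\Delta H$ as in (\ref{deltaHdef}), and use linearity of the pullback $\mathcal{J}_1^*$ to conclude $H_{VP} = \mathcal{J}_1^* H_{fluids} + \mathcal{J}_1^* H_{static} + \Delta H = \mathcal{J}_1^*(H_{fluids} + H_{static}) + \Delta H$, which is the asserted identity; the inequality $\Delta H \geq 0$ and the characterisation of equality are inherited verbatim from Proposition \ref{cannotreduceprop}. There is essentially no obstacle: the only things to verify are the harmless interchange of the momentum and position integrations and the elementary fact that any functional of $\rho$ alone lies in the image of $\mathcal{J}_1^*$. (The identical computation with $e^2$ replaced by $-\mathrm{G}$ yields the analogous decomposition for the self-gravitating Hamiltonian $H_{SG}$ of (\ref{HJeans}), since its mean-field term is again a functional of $\rho[f]$ only.)
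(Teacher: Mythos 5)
Your proposal is correct and follows exactly the paper's reasoning: the paper likewise observes that the mean-field potential term in $H_{VP}$ depends on $f$ only through $\rho[f]$, hence factors through $\mathcal{J}_1$, and then invokes Proposition \ref{cannotreduceprop} for the kinetic part. The only difference is that you spell out the Fubini step and the linearity of $\mathcal{J}_1^*$, which the paper leaves implicit.
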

An entirely analogous result holds for a self-gravitating system as well, since the potential energy in the self-gravitating Hamiltonian $H_{SG}[f]$ (\ref{HJeans}) is exactly that of $H_{VP}[f]$, except that $e^2$ is replaced with $-\mathrm{G}$. In this case ignoring the relative entropy reduces the Jeans equation to the Euler--Poisson equations under the same replacement. 

\subsection{Discussion}\label{subsec:Hdiscuss}
We cannot reduce the kinetic theory description for an ideal gas to a hydrodynamic description exactly, even if we include the entropy density using the Poisson map $\mathcal{J}_1$. 
Suppose that we are given an unknown distribution function $f$, and that we know the values of the integrals $m[f] = \int d^np \ p f$ and $\int d^n p \ y(f)$ for all real functions $y$. From this information alone, we cannot determine the kinetic-theory temperature $\theta[f]$ in general.
At a fixed $q$, if we rearrange $p$-space in a volume-preserving way, the values of the integrals $\int d^n p \ y(f)$ remain unchanged. However, there are examples of such volume-preserving rearrangements where $m[f]$ remains unchanged while $\theta[f]$ changes. 
Indeed, consider the one-dimensional example $f_c(p) = (B(p-c) + B(p+c))/2$, where $B(x)$ is a bump function with support in $[-1/2,1/2]$. For all $c \geq 1$, the integrals $\int dp \ y(f_c)$ for any given real function $y$ coincide, yet $\theta[f_c]$ can be made arbitrary large as $c$ increases. 
We conclude that $\theta[f]$, and hence $H_{KT}[f]$, cannot be determined from the image $\mathcal{J}_A[f]$ (\ref{Jdefcpts}) of $f$ under the Poisson map $\mathcal{J}_A$. (The same is true for the maps $\mathsf{J}_{\xi}$ and $\mathsf{J}$ considered in section \textbf{\ref{subsec:Jeps}}.) This is an instance of the more general \emph{moment closure problem} in kinetic theory, where an exact description of a kinetic system using a finite number of fluid moments is impossible in general. 

An equivalent Poisson map using kinetic-theory temperature instead of entropy cannot be constructed due to the lack of finite closure of fibrewise-polynomials in $p$ of degree $2$ or higher (see section \textbf{\ref{subsec:cotangent}}). While there is a Poisson map that includes the trace of the second moment $\int d^n p \ \lvert p \rvert^2 f$ \cite{Gibbons81, GibbonsHolmTronci08}, such a map necessarily includes $p$-moments of all orders, leading to a system with an infinite number of $C^\infty(M)$-degrees of freedom. If one attempts to form the Poisson bracket of two functionals that depend on the first $l$ moments only, moments of order up to $2l-1$ appear in resulting expression. A closed description that keeps the $l=2$ moment would then require all moments to be kept.

One way out of this dilemma is to impose a \emph{closure relation} between the higher and lower $p$-moments in the Poisson bracket \cite{PerinChandreMorrisonTassi15,Tassi15,Tassi16,Tassi17}. One defines a new Poisson bracket that operates on functionals that depend on the first $l$ $p$-moments only, by first forming their kinetic-theory Poisson bracket, then eliminating the moments of order greater than $l$ using the closure relation. The closure relation needs to be chosen in a way that retains the Jacobi identity. The map to the retained $l$ $p$-moments is not a Poisson map, but the kinetic-theory Hamiltonian factors through this map (see section \textbf{\ref{subsec:notmyreduction}}). This is an example of an \emph{approximate closure}, where one makes uncontrolled approximations in order to derive a reduced system. In this case, the uncontrolled approximation is the imposition of the closure relation on the kinetic-theory Poisson bracket.

In this paper we present yet another way to obtain an approximate closure, where we keep the respective Poisson brackets in kinetic theory and fluid dynamics unchanged, but make an uncontrolled approximation to the kinetic-theory Hamiltonian. Proposition \ref{cannotreduceprop} suggests that, if we neglect the term $\Delta H$ in the kinetic-theory Hamiltonian $H_{KT}$ (\ref{cannotreduce}) by ignoring the relative entropy density $r[f|f_m]$, the remaining non-linear functional is precisely the pullback of the fluid Hamiltonian $H_{fluids}$ through the Poisson map $\mathcal{J}_1$. For small deviations from a local Maxwellian $f = f_m(1+\epsilon h)$, the relative entropy density $r[f\vert f_m]$ and hence the term $\Delta H$ are formally $O(\epsilon^2)$, so $\mathcal{J}_1^*H_{fluids}$ can be thought of as a truncation of $H_{KT}$ in the limit $\epsilon \rightarrow 0$. This approximation gives a manifestly Hamiltonian derivation of the compressible Euler equations, endowing them with a Hamiltonian structure that is inherited from kinetic theory through $\mathcal{J}_1$. Proposition \ref{cannotreduceprop} is an improvement of the result in \cite{Marsden83}, where the Poisson map $\mathcal{J}_0$ is used, the entropy density $s$ is excluded, and the temperature term $\theta[f]$ is replaced by hand with an internal energy term $U[\rho]$.

This should be compared to the usual derivation of hydrodynamics from kinetic theory, where the \emph{(linearised) collisional Boltzmann equation} in the strongly collisional limit $\mathrm{Kn} \rightarrow 0$ is considered instead of collisionless Boltzmann equation. Using a multiple-scales expansion known as the \emph{Chapman--Enskog expansion}, one finds the compressible Euler equations at the zeroth order and the Navier--Stokes--Fourier equations at the first order \cite{ChapmanCowling70book, Cercignani90book}. The Chapman--Enskog expansion offers a formal derivation of hydrodynamics as an approximation to kinetic theory in the strongly collisional limit, but it does not explain why the compressible Euler equations are Hamiltonian. Conversely, Proposition \ref{cannotreduceprop} addresses the Hamiltonian structure of the compressible Euler equations, but has no bearing on their validity as an asymptotic limit of kinetic theory. 
Similarly, we can obtain a \enquote{naive fluid model} for an electrostatic plasma by ignoring $\Delta H$ in $H_{VP}[f]$, and likewise for a self-gravitating system. Corollary \ref{cannotreducecor} suggests that the naive fluid models obtained this way inherit the Hamiltonian structure of $1$-particle kinetic theory through the Poisson map $\mathcal{J}_1$, but it has no bearing on the physical validity of such models.

This type of derivation of reduced conservative models through uncontrolled approximations or constraints in a Lagrangian or Hamiltonian framework is not uncommon in fluid dynamics and mathematical physics. For example, the shallow water and Green--Naghdi equations can be derived by imposing constraints on columnar motion in the variational principle for general ideal compressible fluids \cite{MilesSalmon85}. In geophysical fluid dynamics, the small Rossby number approximation can be made in the variational principle to derive manifestly conservative equations for nearly geostrophic flow \cite{Salmon85}. In the theory of modulated waves, where waves propagate in a slowly varying media, the Whitham modulation equations can be derived by averaging the fast scales in the variational principle \cite{Whitham67,Whitham70,Whitham74book}. Similarly, in non-linear optics, the particle-like reduced dynamics of solitons in the discrete non-linear Schr\"{o}dinger equation can be obtained from making an ansatz in the variational principle \cite{MalomedWeinstein96, Kevrekidis09book}. In semiclassical mechanics, the Hamilton--Jacobi equation can be obtained from the Schr\"{o}dinger equation by first performing the Madelung transform and then taking the classical limit $\hbar \rightarrow 0$ in the Hamiltonian functional \cite{KhesinMisiolekModin20,KhesinMisiolekModin19}. Finally, the Vlasov--Poisson equation can be derived from the BBGKY hierarchy by substituting the ansatz $f_2 =f_1 f_1$ in the Hamiltonian functional \cite{MarsdenMorrisonWeinstein84}.

A thoughtful discussion on the nature of reduced models derived through ignoring formally small terms can be found in \cite{Salmon85}, pages 473--474. In general, one cannot expect the dynamics of such reduced systems to stay close to that of the primitive system after a finite amount of time, since the neglected terms act like small sources of error that accumulate over time. However, by making these approximations in a Hamiltonian way, the reduced system can be made to possess desirable \emph{structural properties}. For example, we can have exact conservation laws for the approximated forms of physical quantities, such as the conservation of potential vorticity in various dispersive shallow-water models and in nearly geostrophic flows \cite{MilesSalmon85,Salmon85}. Another example would be the promotion of adiabatic invariants of the primitive system into exact invariants of the reduced system, such as the wave action for a wave propagating in a slowly varying background \cite{Whitham67,Whitham70,Whitham74book}.

\subsection{The near global Maxwellian regime}\label{subsec:globalM}

In this section we show that the term $\Delta H$ in (\ref{cannotreduce}) stays small uniformly in time if the initial distribution function $f$ is sufficiently close to a \emph{global Maxwellian}, in a sense made precise by the relative entropy. This is not a proof that the solutions of the collisionless Boltzmann equation approximates the solutions of the compressible Euler equations.

In the following we work exclusively on $\mathbb{T}^n$. This choice is motivated by the need to define the quantities
\begin{align}
\label{rho0def}
\rho_0 & = \frac{1}{\mathrm{vol}\left(\mathbb{T}^n\right)}\int d^n q d^n p \ f(q,p), & \\
\label{u0def}
\rho_0 u_0 & = \frac{1}{\mathrm{vol}\left(\mathbb{T}^n\right)}\int d^n q d^n p \ p f(q,p), \\
\rho_0 \left(\lvert u_0\rvert^2 + n\theta_0 \right) & = \frac{1}{\mathrm{vol}\left(\mathbb{T}^n\right)}\int d^n q d^n p \ \lvert p\rvert^2 f(q,p), \label{theta0def} \\
f_M(q,p) & = \frac{\rho_0}{(2\pi\theta_0)^{n/2}}\exp\left( -\frac{\lvert p-u_0\rvert^2}{2\theta_0} \right).
\end{align}
Here $\rho_0$ is the mean mass density, $u_0$ is the mean velocity, and $\theta_0$ is the mean temperature. The function $f_M$ is the \emph{global Maxwellian} with the same mean mass density, mean velocity and mean temperature as $f$. The mean mass density is only defined on manifolds with finite volume, and there is no straightforward interpretation of the mean velocity on manifolds that are not domains in $\mathbb{R}^n$ or $\mathbb{T}^n$, so $\mathbb{T}^n$ is simplest choice of a connected manifold without boundary on which the global Maxwellian is unambiguously defined.

The mean mass density $\rho_0$ is a Casimir functional of the kinetic-theory Poisson bracket described in section \textbf{\ref{subsec:LPforKT}}. The mean velocity $u_0$ and mean temperature $\theta_0$ are constants of motion because they Poisson commute with $H_{KT}$ (\ref{HKT_idealgas}).

We define the total \emph{relative entropy} of $f_1$ against $f_2$, where $f_1$ and $f_2$ are (positive) distribution functions as the $q$-integral of the relative entropy density $r[f_1 \vert f_2]$ (\ref{relentloc}):
\begin{align}\label{relenttot}
R[f_1 \vert f_2] = \int d^n q d^n p \ \left( f_1 \log(f_1/f_2) + f_1 - f_2 \right).
\end{align}
This quantity is non-negative and is zero if and only if $f_1 = f_2$. The relative entropy $\mathcal{R}_{in}$ of $f$ against its associated global Maxwellian
\begin{align}\label{Hindef}
\mathcal{R}_{in} = R[f \vert f_M] = \int d^n q d^n p \ f\log f -  \int d^n q d^n p \ f\log f_M .
\end{align}
is a constant of motion, since the first term is a Casimir functional of the kinetic theory bracket, and the second term is a constant of motion. Now we state the main result of this section:
\begin{prop}\label{boundsprop}
Let $\Delta H[f]$ be as in (\ref{cannotreduce}). Suppose that $f$ evolves according to the collisionless Boltzmann equation. Then the following inequality holds:
\begin{align}\label{dH_inequality0}
\Delta H[f] \leq 2\theta_0 {\mathcal{R}_{in}},
\end{align}
where the right-hand side consists only of constants of motion.
\end{prop}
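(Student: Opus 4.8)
The plan is to bound $\Delta H[f]$ by something monotone in the relative entropy, then use the fact that the relative entropy against the \emph{global} Maxwellian is a constant of motion. The starting point is the representation \eqref{deltaHdef} together with the elementary inequality $1 - \mathrm{e}^{-x} \leq x$ for $x \geq 0$. Applying this pointwise in $q$ with $x = (2/n)\, r[f\vert f_m]/\rho[f]$ gives immediately
\begin{align}\label{firstbound}
\Delta H[f] = \frac{n}{2}\int d^n q \ \rho[f]\theta[f]\left(1 - \exp\left(-\frac{2}{n}\frac{r[f\vert f_m]}{\rho[f]}\right)\right) \leq \int d^n q \ \theta[f]\, r[f\vert f_m].
\end{align}
So the first step reduces matters to controlling $\int d^n q\ \theta[f]\, r[f\vert f_m]$.

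The second step is to replace the local quantities $\theta[f]$ and $r[f\vert f_m]$ by their global counterparts. For the temperature factor I expect to need a pointwise bound of the form $\theta[f](q) \leq 2\theta_0$; this is where the near-global-Maxwellian hypothesis is doing work, and it should follow because $\theta[f]$ is a continuous functional of the local moments which, when $f$ is close to $f_M$ in relative entropy, are close to the constants $(\rho_0,u_0,\theta_0)$. For the relative-entropy factor, the key is the ``chain rule'' for relative entropy: since $f_m$ is the local Maxwellian of $f$ and $f_M$ is a Maxwellian, one has the pointwise (in $q$) identity, after $p$-integration,
\begin{align}\label{chainrule}
r[f\vert f_M] = r[f\vert f_m] + r[f_m \vert f_M],
\end{align}
where the cross term vanishes because $\int d^np\ (f - f_m)\log(f_m/f_M)$ is a linear combination of $\int d^np\ (f-f_m)(1, p, |p|^2)$, all of which are zero by construction of $f_m$. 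In particular $r[f\vert f_m] \leq r[f\vert f_M]$ pointwise. Combining with the temperature bound and integrating over $q$,
\begin{align}\label{secondbound}
\Delta H[f] \leq \int d^n q\ \theta[f]\, r[f\vert f_m] \leq 2\theta_0 \int d^n q\ r[f\vert f_M] = 2\theta_0\, R[f\vert f_M].
\end{align}
The final step observes that $R[f\vert f_M]$ is exactly $\mathcal{R}_{in}$ from \eqref{Hindef}, which was argued above to be a constant of motion for the collisionless Boltzmann flow (the first term $\int f\log f$ is a Casimir, and $\int f\log f_M$ is a linear combination of $\int |p|^2 f$, $\int pf$, $\int f$, all of which Poisson-commute with $H_{KT}$). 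This yields \eqref{dH_inequality0}.

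The main obstacle is the temperature bound $\theta[f] \leq 2\theta_0$. The constant $2$ in \eqref{dH_inequality0} is presumably not sharp and the precise numerical factor will depend on quantifying ``sufficiently close to a global Maxwellian''; I would expect the honest statement to be that for any $\kappa > 1$ there is an $\varepsilon > 0$ such that $\mathcal{R}_{in} < \varepsilon$ forces $\theta[f](q) \leq \kappa\,\theta_0$ for all $q$, via a Csisz\'ar--Kullback--Pinsker-type estimate bounding $|\theta[f] - \theta_0|$ by (a power of) the relative entropy. One must be a little careful: $\theta[f]$ is the temperature built from the \emph{local} moments at $q$, while $\theta_0$ is global, so controlling their difference requires controlling the pointwise-in-$q$ deviation of $(\rho[f],m[f],\int|p|^2f)$ from their means, which is where the integrated relative entropy $\mathcal{R}_{in}$ enters. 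Everything else is either the elementary convexity inequality $1 - \mathrm{e}^{-x}\leq x$ or the bookkeeping identity \eqref{chainrule}, both routine.
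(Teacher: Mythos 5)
Your first step (the elementary inequality $1-\mathrm{e}^{-x}\leq x$), your use of the decomposition $R[f\vert f_M]=R[f\vert f_m]+R[f_m\vert f_M]$ with the correct observation that the cross term vanishes because $f$ and $f_m$ share their $(1,p,\lvert p\rvert^2)$ moments, and your identification of $\mathcal{R}_{in}$ as a constant of motion are all sound and all appear in the paper's argument. The genuine gap is the pointwise temperature bound $\theta[f](q)\leq 2\theta_0$, which you rightly flag as the main obstacle but which cannot be supplied along your route. The proposition is an \emph{unconditional} inequality --- the \enquote{near global Maxwellian regime} describes when the bound is useful, not a hypothesis --- and in any case control of the integrated relative entropy does not give pointwise control of $\theta$. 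From (\ref{relfmfM}) one only gets $\int d^nq\ \rho\, k((\theta-\theta_0)/\theta_0)\leq \mathcal{R}_{M}$ with $k(z)=z-\log(1+z)$, a density-weighted integral bound: at points where $\rho(q)$ is small, $\theta(q)$ can be arbitrarily large without $\mathcal{R}_{in}$ being large, so no Csisz\'ar--Kullback--Pinsker-type argument yields an $L^\infty$ bound on $\theta$, however small $\mathcal{R}_{in}$ is.

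The paper avoids this by never seeking a pointwise bound on $\theta$. It retains \emph{both} estimates on $\eta=1-\exp(-2r[f\vert f_m]/n\rho)$ coming from $1-\mathrm{e}^{-x}\leq\min(1,x)$, namely $\eta\leq 1$ pointwise and $\int d^nq\ \rho\eta\leq(2/n)\mathcal{R}_{m}$ as in (\ref{S_ineq1}); your opening move discards the $L^\infty$ half by replacing $\eta$ with $(2/n)r[f\vert f_m]/\rho$ outright, and that half is needed later. The paper then writes $\theta=\theta_0(1+2\hat{\theta}+\hat{\theta}^2)$ with the renormalised fluctuation $\hat{\theta}=\sqrt{\theta/\theta_0}-1$ of (\ref{Trenom_decomp}), uses $k(z)\geq(\sqrt{1+z}-1)^2$ to get $\int d^nq\ \rho\hat{\theta}^2\leq(2/n)\mathcal{R}_{M}$, estimates the cross term $n\int d^nq\ \rho\hat{\theta}\eta$ by Cauchy--Schwarz in the measure $\rho\, d^nq$, and estimates $\tfrac{n}{2}\int d^nq\ \rho\hat{\theta}^2\eta$ using $\eta\leq 1$. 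Summing gives exactly $\Delta H\leq\theta_0(2\mathcal{R}_{m}+2\mathcal{R}_{M})=2\theta_0\mathcal{R}_{in}$ with no smallness hypothesis; in particular the factor $2$ is not an artefact of quantifying closeness to $f_M$ but comes out of $2ab\leq a^2+b^2$ applied to the square roots of the two relative entropies. To repair your write-up you would need to replace the sought bound $\theta\leq 2\theta_0$ with this splitting of $\theta$ into its mean plus a renormalised fluctuation controlled in a $\rho$-weighted $L^2$ sense.
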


\begin{proof}
To avoid notational clutter, we suppress the $f$-dependence from functionals of $f$ except for the relative entropies. We would like to bound the term
\begin{align}\label{dH_inequality1}
\Delta H = & \frac{n}{2}\int d^n q  \ \rho \theta \left(1 - \exp\left(-\frac{2}{n}\frac{r[f\vert f_m]}{\rho}\right) \right),
\end{align}
which contains the product of the temperature with an entropy-like term. This can be done using the physical a priori estimates in \cite{SaintRaymond09book}, chapter 3.1. First note that
\begin{align}\label{Hinsplit}
\mathcal{R}_{in} = R[f\vert f_M] = R[f \vert f_m] + R[f_m \vert f_M].
\end{align} 
The first term is the relative entropy of $f$ against its local Maxwellian $f_m$, and the second term is the relative entropy of the local Maxwellian $f_m$ against the global Maxwellian $f_M$. Both quantities are manifestly non-negative. To simplify notation, we write $\mathcal{R}_{m}=R[f \vert f_m]$ and $\mathcal{R}_{M}=R[f_m \vert f_M]$.
\begin{enumerate}
\item First we study the entropy-like term $\eta = 1 - \exp\left(-2r[f\vert f_m]/n\rho\right)$. From the elementary inequality $0\leq 1 - \exp(-x) \leq \min (1,x)$ for $x\geq 0$, we have
\begin{align}\label{S_ineq1}
0\leq \eta\leq 1, \quad \text{and } \int d^nq \ \rho \eta \leq \frac{2}{n}\mathcal{R}_{m}.
\end{align}
We can combine these inequalities to obtain control over any power of $\eta$ in terms of the relative entropy:
\begin{align}\label{S_ineq2}
\int d^nq \ \rho \eta^a \leq \frac{2}{n}\mathcal{R}_{m} \quad \text{for all $a\in\mathbb{Z}^+$,}
\end{align}
so the Cauchy--Schwartz and H\"{o}lder inequalities are effective in controlling products of $\eta$ with other entropy-bounded terms.

\item We move on to the thermal term. A direct computation shows that 
\begin{align}\label{relfmfM}
\mathcal{R}_{M} = \int d^n q \ \left( \rho_0 h\left(\frac{\rho-\rho_0}{\rho_0}\right) + \frac{n}{2}\rho k\left(\frac{\theta - \theta_0}{\theta_0}\right) + \rho \frac{\lvert u - u_0\rvert^2}{2\theta_0} \right),
\end{align}
where $h(z) = (1+z)\log(1+z) - z$ and $k(z) = z - \log(1+z)$. Both $h$ and $k$ are non-negative convex functions that are locally quadratic for small $z$, but subquadratic for $z \gg 1$. 
For infinitesimal fluctuations $\mathcal{R}_{M}$ behaves like an $L^2$-type norm, but for finite fluctuations, control on the relative entropy typically only implies control on the $L^1$-size of the fluctuations (e.g. by an argument in \cite{Cercignani88}, page 20). Nonetheless, by considering \emph{renormalised fluctuations} (\cite{SaintRaymond09book}, page 52), we can obtain a sharper result that is sufficient for our purposes. Define the \emph{renormalised thermal fluctuation} $\hat{\theta}$ by
\begin{align}\label{Trenom_decomp}
\hat{\theta} = \sqrt{\frac{\theta}{\theta_0}} - 1, \quad \text{so } \  \frac{\theta - \theta_0}{\theta_0} = 2\hat{\theta} + \hat{\theta}^2.
\end{align}
Then the elementary inequality $k(z)\geq (\sqrt{1+z}-1)^2$ implies that
\begin{align}
\int d^nq \ \rho \hat{\theta}^2 \leq \int d^n q \ \rho k\left(\frac{\theta - \theta_0}{\theta_0}\right) \leq \frac{2}{n}\mathcal{R}_{M},
\end{align}
so the renormalised thermal fluctuation $\hat{\theta}$ is $L^2$-controlled. While this only implies $L^1$-control on the true thermal fluctuation $(\theta-\theta_0)/\theta_0$, the decomposition (\ref{Trenom_decomp}) gives us a handle on the products of $(\theta-\theta_0)$ with other entropy-bounded terms, in particular $\eta$.
\end{enumerate}
Now we are ready to derive a bound for $\Delta H$. We decompose
\begin{align}
\Delta H = \frac{n}{2}\int d^nq \ \rho \theta \eta = \theta_0\left(  \frac{n}{2}\int d^nq \ \rho \eta + \frac{n}{2}\int d^nq \ \rho \frac{\theta - \theta_0}{\theta_0} \eta\right).
\end{align}
The first term in the brackets is bounded by $\mathcal{R}_{m}$ by (\ref{S_ineq1}). To bound the second term,
\begin{align}
\frac{n}{2}\int d^nq \ \rho \frac{\theta - \theta_0}{\theta_0} \eta = \ & \frac{n}{2}\int d^nq \ \rho \left(2\hat{\theta} + \hat{\theta}^2\right) \eta\nonumber \\
= \ & n\int d^nq \ \rho \hat{\theta}\eta + \frac{n}{2}\int d^nq \ \rho \hat{\theta}^2 \eta, \nonumber \\
\leq \ & n\left(\int d^nq \rho\hat{\theta}^2 \right)^{1/2}\left(\int d^nq \rho \eta^2 \right)^{1/2} +  \frac{n}{2}\int d^nq \ \rho \hat{\theta}^2, \nonumber \\
\leq \ & n\left(\frac{2}{n}\mathcal{R}_{M}\right)^{1/2}\left(\frac{2}{n}\mathcal{R}_{m}\right)^{1/2} + \mathcal{R}_{M} \nonumber \\
\leq \ & \mathcal{R}_{M} + \mathcal{R}_{m} + \mathcal{R}_{M}.
\end{align}
Thus $\Delta H \leq \theta_0(2\mathcal{R}_{M} + 2\mathcal{R}_{m}) = 2\theta_0\mathcal{R}_{in}$, which is precisely the claimed inequality (\ref{dH_inequality0}).
\end{proof}
The inequality (\ref{dH_inequality0}) is effective as $\mathcal{R}_{in} \rightarrow 0$. In this limit, the Mach number of the gas is necessarily small, since
\begin{align}
\int d^nq \ \rho \frac{\lvert u - u_0\rvert^2}{2\theta_0} \leq \mathcal{R}_{in},
\end{align}
by (\ref{relfmfM}).

The inequality (\ref{dH_inequality0}) holds regardless of how $f$ evolves. However, if $f$ does not evolve according to the collisionless Boltzmann equation, the right-hand side of (\ref{dH_inequality0}) is not necessarily a constant of motion.
Nonetheless, if $f$ evolves according to the \emph{Vlasov--Poisson equation} (see section \textbf{\ref{subsec:LPforKT}}), a statement similar to Proposition \ref{boundsprop} holds.
We show this by finding constants of motion of the Vlasov--Poisson equation that bound $\theta_0$ and $\mathcal{R}_{in}$ from above. 

Define the modified temperature $\Phi_0$ by
\begin{align}\label{Vlasov_Phidef}
\rho_0 \left(\lvert u_0\rvert^2 + n\Phi_0 \right) \mathrm{vol}\left(\mathbb{T}^n\right) & = \int d^n q d^n p \ \lvert p\rvert^2 f(q,p) + {e}^2\int d^nq d^nq^{\prime} \rho(q)\rho(q^{\prime})G(q,q^{\prime}).
\end{align}
The right-hand side is $2H_{VP}[f]$ (\ref{HVlasov}), which is a constant of motion of the Vlasov--Poisson equation. Since $\rho_0$ and $u_0$ are constants of motion of the Vlasov--Poisson equation as well, so is $\Phi_0$. The integral operator with kernel $G(q,q^{\prime})$ is positive semidefinite, so comparing (\ref{theta0def}) and (\ref{Vlasov_Phidef}) gives $\theta_0\leq\Phi_0$.

We then seek a constant of motion $\mathcal{S}_{in}$ that bounds $\mathcal{R}_{in}$ from above. The explicit expression for $\mathcal{R}_{in}$ (\ref{Hindef}) is
\begin{align}\label{Hinexpr}
\mathcal{R}_{in} = \int d^nq d^np \ f\log f +  \rho_0\mathrm{vol}\left(\mathbb{T}^n\right)\left(-\log\rho_0 + \frac{n}{2}\log\theta_0 + \frac{n}{2}\log\left(2\pi\mathrm{e}\right)\right).
\end{align}
Since $\log \theta_0 \leq \log \Phi_0$, choosing
\begin{align}\label{Sinexpre}
\mathcal{S}_{in} = \int d^nq d^np \ f\log f + \rho_0\mathrm{vol}\left(\mathbb{T}^n\right)\left(-\log\rho_0 + \frac{n}{2}\log\Phi_0 + \frac{n}{2}\log\left(2\pi\mathrm{e}\right)\right)
\end{align}
gives an upper bound for $\mathcal{R}_{in}$ by constants of motion.
\begin{corollary}
Let $\Delta H[f]$ be as in (\ref{cannotreduce}). Suppose that $f$ evolves according to the Vlasov--Poisson equation. Then the following inequality holds:
\begin{align}\label{dH_inequality0VP}
\Delta H[f] \leq 2\Phi_0 \mathcal{S}_{in},
\end{align}
where the right-hand side consists only of constants of motion.
\end{corollary}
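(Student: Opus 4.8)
The plan is to bootstrap from Proposition~\ref{boundsprop} and the two elementary comparisons $\theta_0 \le \Phi_0$ and $\mathcal{R}_{in} \le \mathcal{S}_{in}$ recorded just above the statement. The key observation is that the inequality $\Delta H[f] \le 2\theta_0 \mathcal{R}_{in}$ proved in Proposition~\ref{boundsprop} is in fact a \emph{pointwise} inequality on $C^\infty(T^*M)_+$: its proof uses only the relative-entropy splitting~(\ref{Hinsplit}), the renormalised-fluctuation identity~(\ref{Trenom_decomp}), and the Cauchy--Schwartz and H\"older inequalities, none of which refer to any equation of motion. The hypothesis that $f$ solves the collisionless Boltzmann equation enters Proposition~\ref{boundsprop} only to make the right-hand side a constant of motion. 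Hence the bound applies verbatim to any distribution function, in particular to a solution of the Vlasov--Poisson equation. Since $\theta_0 > 0$, $\Phi_0 > 0$ and $0 \le \mathcal{R}_{in} \le \mathcal{S}_{in}$, combining this with $\theta_0 \le \Phi_0$ gives $\Delta H[f] \le 2\theta_0 \mathcal{R}_{in} \le 2\Phi_0 \mathcal{R}_{in} \le 2\Phi_0 \mathcal{S}_{in}$, which is the asserted inequality.

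It then remains only to confirm that the right-hand side consists of constants of motion of the Vlasov--Poisson flow, and this is essentially bookkeeping with the facts already assembled. The quantity $2H_{VP}[f]$ is conserved along its own flow; $\rho_0$ is a Casimir of the kinetic-theory Lie--Poisson bracket; and the mean momentum $\rho_0 u_0$ Poisson-commutes with $H_{VP}$ because $H_{VP}$ is invariant under the translations of $\mathbb{T}^n$ (the uniform neutralising background being translation-invariant as well). Consequently $\Phi_0$, which by~(\ref{Vlasov_Phidef}) is an algebraic combination of $H_{VP}$, $\rho_0$ and $u_0$, is a constant of motion, and it is strictly positive since comparing~(\ref{theta0def}) with~(\ref{Vlasov_Phidef}) and using the positive semidefiniteness of the kernel $G$ yields $0 < \theta_0 \le \Phi_0$. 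Likewise $\int d^nq\, d^np\ f\log f$ is a Casimir of the kinetic-theory bracket and the remaining terms of $\mathcal{S}_{in}$ in~(\ref{Sinexpre}) are functions of $\rho_0$ and $\Phi_0$, so $\mathcal{S}_{in}$ is a constant of motion; the comparison $\mathcal{R}_{in} \le \mathcal{S}_{in}$ follows from $\log\theta_0 \le \log\Phi_0$ applied to~(\ref{Hinexpr}).

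The step that requires genuine care, rather than bookkeeping, is the conservation of the mean momentum $\rho_0 u_0$ under Vlasov--Poisson on $\mathbb{T}^n$: one must check that the neutralising background charge does not break the translation invariance of $H_{VP}$, and that the Green's function $G(q,q')$ on $\mathbb{T}^n$ is a function of $q-q'$ only, so that the momentum map of the translation action is a genuine constant of motion. Granting this — which also underlies Proposition~\ref{boundsprop} and the earlier claim that $u_0$ is conserved — the corollary follows at once from the chain of inequalities above.
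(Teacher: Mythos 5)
Your proposal is correct and follows essentially the same route as the paper: the paper likewise observes that the bound $\Delta H[f] \leq 2\theta_0\mathcal{R}_{in}$ of Proposition~\ref{boundsprop} holds regardless of how $f$ evolves, and then combines it with $\theta_0 \leq \Phi_0$ (from the positive semidefiniteness of the kernel $G$) and $\mathcal{R}_{in} \leq \mathcal{S}_{in}$ (from $\log\theta_0 \leq \log\Phi_0$) to conclude. Your extra care in verifying the positivity of the factors in the chain of inequalities and the conservation of $\rho_0 u_0$ under the Vlasov--Poisson flow is consistent with, and slightly more explicit than, the paper's presentation.
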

A similar statement for the \emph{Jeans equation} of a self-gravitating system does not hold however, since the potential energy in the self-gravitating Hamiltonian $H_{SG}[f]$ (\ref{HJeans}) has the opposite sign. While the inequality (\ref{dH_inequality0}) still holds, we cannot find an upper bound for $\theta_0$ in terms of constants of motion by the same method due to the sign difference.

\subsection{Approximate reduction for isotropic near local Maxwellian distribution functions}\label{subsec:isotropic}

In this section we develop an approximation of $\Delta H$ (\ref{deltaHdef}) for nearly local Maxwellian and \emph{isotropic} distribution functions in terms of the generalised entropy variables constructed in section \textbf{\ref{subsec:JA}}.
Let $f = f_m (1 + \epsilon h)$, where $f_m$ is the local Maxwellian associated to $f$ (\ref{localMdef}), and $\epsilon$ be a small parameter. In what follows, we will consider a formal perturbation expansion in $\epsilon$.
The relative entropy density (\ref{relentloc}) of $f$ against $f_m$ is then
\begin{align}\label{rffmepsilon}
r[f\vert f_m] = \frac{\epsilon^2}{2}\int d^n p \ f_m h^2 + O(\epsilon^3).
\end{align}
Substituting (\ref{rffmepsilon}) in (\ref{theta2}) gives $\theta - T = O(\epsilon^2)$, whence
\begin{align}\label{DeltaHhL2}
\Delta H = \frac{\epsilon^2}{2}\int d^n q d^n p \ \theta f_m h^2 + O(\epsilon^3) = \frac{\epsilon^2}{2}\int d^n q d^n p \ T f_m h^2 + O(\epsilon^3),
\end{align}
so $\Delta H$ behaves like a weighted $L^2$-norm on the non-Maxwellian deviation $h$ in the near local Maxwellian limit $\epsilon \rightarrow 0$. An approximate reduction becomes possible if we can express $\int d^np f_m h^2$ in terms of hydrodynamic variables to leading order in $\epsilon$. We show that this can be achieved when $h$ is \emph{isotropic} in $p$, namely that
\begin{align}
h(q,p) = h(q, \chi), \qquad \text{where } \chi = \frac{\lvert p - u \rvert^2}{2\theta}.
\end{align}
The assumption that $f$ is isotropic is known as the \emph{Eddington approximation} in the kinetic theory of radiative transfer \cite{AndersonSpiegel72,Krook55,UnnoSpiegel66}. For any isotropic function $g$, we have
\begin{align}
\int d^n p \ f_m g = \rho \int_0^\infty d\chi \ F_n  g, \qquad \text{where } F_n(\chi) = \frac{1}{\Gamma(n/2)}\chi^{({n}/{2})-1} \exp(-\chi).
\end{align}
Let us also expand $h$ in terms of \emph{(generalised) Laguerre polynomials}, following the conventions in \cite{AndrewsAskeyRoy1999bookCh6}:
\begin{align}\label{hLagurrre}
h(q,\chi) = \sum_{b=2}^\infty \beta_b(q) L^{({n}/{2}) -1}_b(\chi),
\end{align}
where the $b=0$ and $b=1$ terms are dropped because $f$ is required to have the same $1$ and $\chi$ moments as $f_m$. The expression of $\Delta H$ in terms of the $\beta_b$ is
\begin{align}\label{DeltaHbetas}
\Delta H = \frac{\epsilon^2}{2}\int d^n q \rho T \sum_{b=2}^\infty \frac{\Gamma(b + n/2)}{\Gamma(b+1)\Gamma(n/2)} \beta_b^2 + O(\epsilon^3).
\end{align}
In the following we show that the $\beta_b$ can be expressed solely in terms of the density $\rho$ and the generalised entropy densities (\ref{Jdefcpts}) $s_1,s_2,\ldots s_b$ to leading order in $\epsilon$, namely that
\begin{align}\label{betasolnform}
\beta_b = \frac{1}{\epsilon}\tilde{\beta}_b(\rho,s_1,s_2,\ldots,s_b) + O(\epsilon),
\end{align}
which permits $\Delta H$ (\ref{DeltaHbetas}) to be approximated by hydrodynamic variables with $O(\epsilon^3)$ error. The $\tilde{\beta}_b$ are $O(\epsilon)$ quantities formed by differences of nominally $O(1)$ quantities and have no explicit $\epsilon$-dependence, as we shall see.

For notational simplicity, we write $\eta_a = s_a/\rho$ for $a\geq 1$. From (\ref{Jdefcpts}) we have
\begin{align}\label{etaaeqn}
\eta_a = & \frac{1}{\rho} \int d^n p \ f(\log f)^a = \int d^n p \ \frac{f_m}{\rho}(1 + \epsilon h) \left(\log f_m + \log(1+\epsilon h) \right)^a, \nonumber \\
= & \int_0^\infty d\chi \ F_n\left(\log\left(\frac{\rho}{(2\pi\theta)^{n/2}}\right) - \chi \right)^a \nonumber \\
\qquad & + \epsilon  \int_0^\infty d\chi \ F_n h \left(\log\left(\frac{\rho}{(2\pi\theta)^{n/2}}\right) - \chi \right)^{a-1}\left(\log\left(\frac{\rho}{(2\pi\theta)^{n/2}}\right) +a - \chi \right) + O(\epsilon^2).
\end{align}
The $O(1)$ term in the expression (\ref{etaaeqn}) for $\eta_a$ is the value of $\eta_a$ when $f$ is a local Maxwellian, which we denote as $\bar{\eta}_a$. More importantly, the $O(\epsilon)$ term in (\ref{etaaeqn}) is a \emph{linear} function of $\beta_2,\ldots \beta_a$ due to the properties of the Laguerre polynomials. From the $a=1$ case of (\ref{etaaeqn}) we find
\begin{align}\label{eta1eqn}
\log\left(\frac{\rho}{(2\pi\theta)^{n/2}}\right) = \eta_1 + \frac{n}{2} - \frac{r[f\vert f_m]}{\rho} = \eta_1 + \frac{n}{2}+ O(\epsilon^2),
\end{align}
so we can rewrite (\ref{etaaeqn}) for $a \geq 2$ as a linear system of equations
\begin{align}\label{etaMeqn}
\eta_a - \bar{\eta}_a(\eta_1) = \epsilon \sum_{b=2}^\infty M(\eta_1)_{ab} \beta_b + O(\epsilon^2),
\end{align}
where $M_{ab}$ is a lower triangular matrix with nonzero diagonal entries. The left-hand side $\eta_a - \bar{\eta}_a(\eta_1)$ of (\ref{etaMeqn}) is an $O(\epsilon)$ quantity that is the difference of two nominally $O(1)$ quantities. The first few terms are
\begin{align}
\begin{pmatrix}
\bar{\eta}_2 \\
\bar{\eta}_3 \\
\bar{\eta}_4 
\end{pmatrix}
 & =
\begin{pmatrix}
\eta_1^2 + \frac{1}{2}n \\
\eta_1^3 + \frac{3}{2}n\eta_1 - n \\
\eta_1^4 + 3n\eta_1^2 - 4n\eta_1 + \frac{3}{4}n^2 + 3n 
\end{pmatrix}
,\nonumber \\ 
M(\eta_1) & = 
\begin{pmatrix}
\frac{1}{4}n(n + 2) &  0 & 0 \\
\frac{3}{4}n(n + 2)(\eta_1 - 1) & \frac{1}{8}n(n^2 + 6n + 8) &  0 \\
\frac{3}{4}n(n + 2)(2\eta_1^2 - 4\eta_1 + n + 4) & \frac{1}{2}n(n^2 + 6n + 8)(\eta_1 - 2) & \frac{1}{16}n(n^3 + 12n^2 + 44n + 48)
\end{pmatrix}
.
\end{align}
The solution to (\ref{etaMeqn}) is
\begin{align}\label{etaMsoln}
\beta_b = \frac{1}{\epsilon} \sum_{c=2}^\infty (M^{-1})(\eta_1)_{bc} \left( \eta_c - \bar{\eta}_c(\eta_1) \right) + O(\epsilon),
\end{align}
where $M^{-1}_{bc}$, the inverse matrix of $M_{ab}$, is again lower triangular with nonzero diagonal entries. We have thus arrived at (\ref{betasolnform}). The expressions for the first few $\bar{\beta}_b$ are
\begin{align}
\tilde{\beta}_2 & = \frac{2\eta_2 - 2\eta_1^2 - n}{n(n + 2)/2}, \\
\tilde{\beta}_3 & =  \frac{4\eta_1^3 - 6\eta_1^2 - 6\eta_2\eta_1 + 6\eta_2 + 2\eta_3 - n}{n(n^2 + 6n + 8)/4}, \\
\tilde{\beta}_4 & = \frac{48\eta_2 + 32\eta_3 + 4\eta_4 - 4n - 96\eta_1\eta_2 - 16\eta_1\eta_3 - 12n\eta_2 + 24\eta_1^2\eta_2 + 12n\eta_1^2 - 48\eta_1^2 + 64\eta_1^3 - 12\eta_1^4 + 3n^2}{n(n^3 + 12n^2 + 44n + 48)/4} .
\end{align}
Returning to the problem of approximating $\Delta H$, consider for $A = 2,3,\ldots$ the quantities
\begin{align}
\Delta H_A [f] = \frac{1}{2}\int d^n q \rho[f] T(\rho[f],s_1[f]) \sum_{b=2}^A \frac{\Gamma(b + n/2)}{\Gamma(b+1)\Gamma(n/2)} \bar{\beta}_b(\rho,s_1,s_2,\ldots,s_b)^2,
\end{align}
and let $\Delta H_\infty = \lim_{A\rightarrow \infty} \Delta H_A$. We know from (\ref{DeltaHbetas}) that $\Delta H - \Delta H_\infty = O(\epsilon^3)$. Each $\Delta H_A$ factors through the Poisson map $\mathcal{J}_A: C^\infty(T^*M)_+ \rightarrow \mathfrak{s}^*_A$ (\ref{Jdef}). The infinite sum $\Delta H_\infty$ does not factor through $\mathcal{J}_A$ for any finite $A$, but it factors through the larger Poisson map $\mathsf{J}:C^\infty(T^*M)_+ \rightarrow \mathrm{Vect}(M)^* \times P\Omega^n(M)$ (\ref{Jbigdef}), since the Tsallis entropy density $\rho_\xi[f]$ (\ref{Jepsdefcpts}) is the generating function for all the generalised entropy densities $s_a[f]$.

We remark that the isotropic assumption of $h$ is itself a truncation of the full tensor Hermite polynomial expansion of $h$ by discarding the non-isotropic moments \cite{AndersonSpiegel72,Krook55,UnnoSpiegel66}. While the relationship (\ref{DeltaHhL2}) between $\Delta H$ and $h$ holds in the near local Maxwellian regime regardless of whether $h$ is isotropic, we cannot obtain the contribution from the non-isotropic parts of $h$ to $\Delta H$ using the generalised entropies.
The non-isotropic part of $h$ does not enter the $O(\epsilon)$ part of (\ref{etaaeqn}) because $\log f_m$ is an isotropic function.
If one attempts to continue the expansion $h = h^{(0)} + \epsilon h^{(1)} + \ldots$, the $O(\epsilon^2)$ part of (\ref{etaaeqn}) gives an unclosed relationship between the isotropic part of $h^{(1)}$ and the non-isotropic part of $h^{(0)}$, both of which are unknown.

The use of $\Delta H_A$ instead of the full series $\Delta H_\infty$ corresponds to a further truncation of the Laguerre polynomial expansion of $h$ (\ref{hLagurrre}) at order $A$
\begin{align}\label{htruncatedLaguerre}
h(q,\chi) = \sum_{b=2}^A \beta_b(q) L^{({n}/{2}) -1}_b(\chi),
\end{align}
so hydrodynamic models formed with the Hamiltonian $H_{fluids}[m,\rho,s_1] + \Delta H_A[\rho,s_1,\ldots,s_A]$ on $\mathfrak{s}^*_A$ can be considered as a finite isotropic moment truncation for the non-Maxwellian deviation $h$ in the near local Maxwellian limit $\epsilon \rightarrow 0$. For example, for $A = 2$, such a Hamiltonian on $\mathfrak{s}^*_2$ is
\begin{align}
H[m,\rho,s_1,s_2] = \int d^nq \ \left(\frac{\lvert m\rvert^2}{2\rho} + \frac{n}{2}\rho T(\rho,s_1)\left(1+\frac{(2s_2 - 2s_1^2 - n\rho)^2}{2n^2(n + 2)\rho^2} \right)\right),
\end{align}
where $T$ is as in (\ref{EOS_idealgas}).

\section{Conclusion}

Kinetic theory and fluid dynamics are both \emph{noncanonical Hamiltonian systems} in the absence of dissipative effects such as collisions, viscous friction and thermal diffusion. The time evolution in each theory is determined by a Poisson bracket and a Hamiltonian functional. This manuscript aims to study the relationship between the two Hamiltonian structures of the respective theories. In particular, we investigated the possibility of a \emph{Hamiltonian reduction through a Poisson map} (section \textbf{\ref{subsec:myreduction}}): given a more primitive noncanonical Hamiltonian system $(\mathfrak{M}_1,\{\cdot,\cdot\}_1,\tilde{H})$ and a reduced noncanonical Hamiltonian system $(\mathfrak{M}_2,\{\cdot,\cdot\}_2,H)$, we would like to relate the two with a map $\mathcal{J}:\mathfrak{M}_1 \rightarrow \mathfrak{M}_2$ that respects the Poisson brackets on the two spaces (i.e. a Poisson map), and such that $\tilde{H} = H \circ \mathcal{J}$.

It is known that there exists a Poisson map $\mathcal{J}_0$ from the space of $1$-particle distribution functions in kinetic theory to the space of hydrodynamic variables excluding the entropy density \cite{Guillemin80, Marsden83}. We generalised this to a Poisson map $\mathcal{J}_1$  (section \textbf{\ref{subsec:JA}}), which takes the $p$-integral of the Boltzmann entropy $f\log f$ to the hydrodynamic entropy density. The map $\mathcal{J}_1$ is the second member of a family of Poisson maps $\mathcal{J}_A$, which include generalised entropy densities $s_a$, given as the $p$-integrals of $f(\log f)^a$ for $a = 0,1,\ldots A$, as additional hydrodynamic variables (section \textbf{\ref{subsec:JA}}). The maps $\mathcal{J}_A$ are in turn obtained from the Taylor expansion of a Poisson map $\mathsf{J}$ including a formal parameter $\epsilon$, which maps the $p$-integral of the Tsallis entropy \cite{Tsallis88} $f^{1+\epsilon}$ to a hydrodynamic variable (section \textbf{\ref{subsec:Jeps}}).

However, the kinetic theory Hamiltonian $H_{KT}[f]$ (\ref{HKT_idealgas}) does not factor through this Poisson map, as it depends on the $|p|^2$-moment of the distribution function $f$, which cannot be expressed in terms of the image $\mathcal{J}_1[f]$. This difficulty is related to the \emph{moment closure problem} in kinetic theory, where kinetic theory cannot be described exactly by a finite number of $p$-integrals involving $f$. We can therefore only hope for \emph{approximate closure}, where we make some uncontrolled approximations or impose some constraints to arrive at a reduced description. In our case, we replaced the kinetic theory temperature $\theta[f]$ (\ref{theta0}), which is defined in terms of the local root-mean-squared velocity deviation from the mean, with the thermodynamic temperature $T(\rho,s)$ (\ref{EOS_idealgas}), which is defined in terms of the local ideal gas equation of state.
This is equivalent to ignoring the contribution from the relative entropy density $r[f|f_m]$ (\ref{relentloc}) of $f$ against its local Maxwellian $f_m$ (\ref{localMdef}) in the Hamiltonian functional $H_{KT}[f]$. After making this uncontrolled approximation, the resulting effective Hamiltonian is precisely the pullback of the fluid Hamiltonian $H_{fluids}[m,\rho,s]$ (\ref{Hfluids_idealgas}) through $\mathcal{J}_1$ (section \textbf{\ref{sec:Hamiltonian}}). Unlike perturbation methods such as the Chapman--Enskog expansion \cite{ChapmanCowling70book, Cercignani90book}, our approach has no bearing on the physical validity of the reduced equations obtained; however, our approach reveals that the passage from kinetic theory to ideal fluid dynamics can be done in a manifestly Hamiltonian manner (section \textbf{\ref{subsec:Hdiscuss}}). We also obtained an analogous relationship between the Vlasov--Poisson equation (\ref{Vlasov Poisson}) and the Euler--Poisson equations \cite{DietzSandor99, ZakharovKuznetsov97} with non-constant entropy.
We conclude with an investigation on the ignored term $\Delta H$ (\ref{deltaHdef}). We obtained a bound on $\Delta H$ with constants of motion in the near global Maxwellian regime (section \textbf{\ref{subsec:globalM}}), and obtained an approximation for $\Delta H$ in terms of the generalised entropy densities $s_a$ for isotropic and near local Maxwellian distribution functions (section \textbf{\ref{subsec:isotropic}}).

We remark that there are other Hamiltonian approaches to the moment closure problem. The first is to give up finiteness and allow an infinite number of $p$-integrals as hydrodynamic variables \cite{Gibbons81, GibbonsHolmTronci08}, in which case the map from kinetic distribution functions to hydrodynamic variables is still a Poisson map. More recently, it has been popular in plasma physics to construct a new but related Poisson bracket on the space of hydrodynamic variables through imposing a suitable chosen closure relation \cite{Tassi15, Tassi16, Tassi17, PerinChandreMorrisonTassi15}. In these applications, the map from distribution functions to hydrodynamic variables is no longer a Poisson map, but the kinetic theory Hamiltonian is the pullback of the fluid Hamiltonian. We give a possible mathematical interpretation of these constructions in section \textbf{\ref{subsec:notmyreduction}}. The approach taken in this manuscript is to simply make an uncontrolled approximation in the Hamiltonian functional, which is a very common technique in theoretical physics and applied mathematics in deriving reduced Hamiltonian models \cite{MilesSalmon85, Salmon85, Whitham67,Whitham70,Whitham74book, MalomedWeinstein96, Kevrekidis09book, KhesinMisiolekModin20,KhesinMisiolekModin19, MarsdenMorrisonWeinstein84} (see section \textbf{\ref{subsec:Hdiscuss}} for examples and more discussion).

\section*{Acknowledgements}\label{sec:ack}

The author is grateful to Paul Dellar for his many helpful comments and suggestions. The author thanks the reviewers for suggesting the waterbag distribution functions in section \textbf{\ref{subsec:notmyreduction}} and for raising a discussion on momentum maps in section \textbf{\ref{subsec:JA}}. This work was supported by the Mathematical Institute, University of Oxford, which played no other role in the research, or in the preparation and submission of the manuscript.

\begin{appendices}

\section{Proofs of Propositions \ref{Jepsprop} and \ref{Jbigrop} }\label{App:proofs}

In this appendix we present the proofs to Propositions \ref{Jepsprop} and \ref{Jbigrop}. The core idea is to perform a computation similar to that in the proof of Proposition \ref{JPoissonprop}.

\subsection*{Proof of Proposition \ref{Jepsprop}}

\begin{proof}
We show by direct computation that, if $\tilde{F} = F \circ \mathsf{J}_{\xi}$ and $\tilde{G} = G \circ \mathsf{J}_{\xi}$, then
\begin{align}
\left\{\tilde{F},\tilde{G} \right\}_{KT} = \left\{{F},{G} \right\}_{\mathfrak{s}_0} \circ \mathsf{J}_{\xi} .
\end{align}
By the functional chain rule we have
\begin{align}\label{chainrule_epsilon}
\frac{\delta \tilde{F}}{\delta f} = \left\langle \frac{\delta F}{\delta m}, p \right\rangle + (1+\xi) f^\xi \frac{\delta F}{\delta \rho_{\xi}}.
\end{align}
Again, $\delta F/\delta m$ and $\delta F/\delta \rho_\xi$ have no $p$-dependence. The case $\xi = 0$ corresponds to the previously known Poisson map in \cite{Guillemin80, Marsden83}, which in our notation is denoted by $\mathsf{J}_{0}=\mathcal{J}_0$. Focusing on $\xi \neq 0$, we have
\begin{align}\label{Jeps0thline}
\left\{ \tilde{F}, \tilde{G} \right\}_{KT} = & \int d^n q d^n p \ f \left\{ \frac{\delta \tilde{F}}{\delta f}, \frac{\delta \tilde{G}}{\delta f} \right\} \nonumber \\
& \qquad + f \left\{(1+\xi) f^\xi \frac{\delta F}{\delta \rho_{\xi}}, \left\langle \frac{\delta G}{\delta m}, p \right\rangle  \right\} + f \left\{\left\langle \frac{\delta F}{\delta m}, p \right\rangle, (1+\xi)f^\xi\frac{\delta G}{\delta \rho_{\xi}}  \right\} \nonumber \\
& \qquad + f \left\{(1+\xi) f^\xi \frac{\delta F}{\delta \rho_{\xi}}, (1+\xi)f^\xi\frac{\delta G}{\delta \rho_{\xi}}  \right\}.
\end{align}
The term on the first line is $-\left\langle m[f], [\delta F/\delta m, \delta G/\delta m]\right\rangle_{\mathrm{Vect}(M)}$ by the computation in Proposition \ref{JPoissonprop}. To compute the terms on the second line, consider
\begin{flalign}\label{Jeps2ndline}
& \int d^nq d^n p \ f \left\{(1+\xi) f^\xi \frac{\delta F}{\delta \rho_{\xi}}, \left\langle \frac{\delta G}{\delta m}, p \right\rangle  \right\} \nonumber &\\
= & \int d^nq d^n p \ (1+\xi) f^{1+\xi} \left\{\frac{\delta F}{\delta \rho_{\xi}}, \left\langle \frac{\delta G}{\delta m}, p \right\rangle  \right\} + (1+\xi) f\frac{\delta F}{\delta \rho_{\xi}} \left\{f^\xi, \left\langle \frac{\delta G}{\delta m}, p \right\rangle  \right\} \nonumber &\\
= & \int d^nq d^n p \ (1+\xi) f^{1+\xi} \left\{\frac{\delta F}{\delta \rho_{\xi}}, \left\langle \frac{\delta G}{\delta m}, p \right\rangle  \right\} + \xi(1+\xi) f^\xi \frac{\delta F}{\delta \rho_{\xi}} \left\{f, \left\langle \frac{\delta G}{\delta m}, p \right\rangle  \right\} \nonumber &\\
= & \int d^nq d^n p \ (1+\xi) f^{1+\xi} \left\{\frac{\delta F}{\delta \rho_{\xi}}, \left\langle \frac{\delta G}{\delta m}, p \right\rangle  \right\} + \xi \frac{\delta F}{\delta \rho_{\xi}} \left\{f^{1+\xi}, \left\langle \frac{\delta G}{\delta m}, p \right\rangle  \right\} \nonumber &\\
= & \int d^nq d^n p \ (1+\xi) f^{1+\xi} \left\{\frac{\delta F}{\delta \rho_{\xi}}, \left\langle \frac{\delta G}{\delta m}, p \right\rangle  \right\} - \xi f^{1+\xi}  \left\{\frac{\delta F}{\delta \rho_{\xi}}, \left\langle \frac{\delta G}{\delta m}, p \right\rangle  \right\} \nonumber &\\
= &  \int d^nq d^n p \ f^{1+\xi} \left\{\frac{\delta F}{\delta \rho_{\xi}}, \left\langle \frac{\delta G}{\delta m}, p \right\rangle  \right\} = \left\langle \rho_{\xi}[f], \mathcal{L}_{\frac{\delta G}{\delta m}} \frac{\delta F}{\delta \rho_{\xi}} \right\rangle_{C^\infty(M)}.&
\end{flalign} 
The term on the third line of (\ref{Jeps0thline}) vanishes, because
\begin{flalign}\label{Jeps3rdline}
& \int d^n q d^n p \ (1+\xi)^2 f \left\{ f^\xi \frac{\delta F}{\delta \rho_{\xi}}, f^\xi\frac{\delta G}{\delta \rho_{\xi}}  \right\} &\nonumber \\
= &  \int d^n q d^n p \ (1+\xi)^2 f \bigg(
f^{2\xi}\left\{\frac{\delta F}{\delta \rho_{\xi}},\frac{\delta G}{\delta \rho_{\xi}} \right\}
+ f^\xi \frac{\delta F}{\delta \rho_{\xi}}\left\{f^\xi,\frac{\delta G}{\delta \rho_{\xi}} \right\}   \nonumber \\
& \qquad \qquad \qquad \qquad \qquad 
+ f^\xi \frac{\delta G}{\delta \rho_{\xi}}\left\{\frac{\delta F}{\delta \rho_{\xi}} ,f^\xi\right\}  
+ \frac{\delta F}{\delta \rho_{\xi}}\frac{\delta G}{\delta \rho_{\xi}}\left\{f^\xi,f^\xi \right\}
\bigg) &\nonumber \\
= & \int d^n q d^n p \ (1+\xi)^2 f^{1+\xi} \left(
\frac{\delta F}{\delta \rho_{\xi}}\left\{ f^\xi,\frac{\delta G}{\delta \rho_{\xi}} \right\}
+ \frac{\delta G}{\delta \rho_{\xi}}\left\{\frac{\delta F}{\delta \rho_{\xi}} ,f^\xi\right\} 
\right) &\nonumber \\
= & \int d^n q d^n p \ \xi(1+\xi)^2 f^{2\xi}\left(
\frac{\delta F}{\delta \rho_{\xi}}\left\{ f,\frac{\delta G}{\delta \rho_{\xi}} \right\}
+ \frac{\delta G}{\delta \rho_{\xi}}\left\{\frac{\delta F}{\delta \rho_{\xi}} ,f \right\} 
\right) &\nonumber \\
= & \int d^n q d^n p \ \frac{\xi(1+\xi)^2}{1+2\xi}\left(
\frac{\delta F}{\delta \rho_{\xi}}\left\{ f^{1+2\xi},\frac{\delta G}{\delta \rho_{\xi}} \right\}
+ \frac{\delta G}{\delta \rho_{\xi}}\left\{\frac{\delta F}{\delta \rho_{\xi}} ,f^{1+ 2 \xi} \right\} 
\right) &\nonumber \\
= & \int d^n q d^n p \ \frac{-2\xi(1+\xi)^2}{1+2\xi}f^{1+2\xi}
\left\{ \frac{\delta F}{\delta \rho_{\xi}},\frac{\delta G}{\delta \rho_{\xi}} \right\} = 0.&
\end{flalign}
We have used the identity (\ref{ydiff}) extensively to rearrange terms of the form $f^\lambda\{f^\mu,\ldots\}$. Putting everything together, we have
\begin{align}
\left\{\tilde{F},\tilde{G} \right\}_{KT} = -\left\langle m[f], \left[\frac{\delta F}{\delta m}, \frac{\delta G}{\delta m}\right]\right\rangle_{\mathrm{Vect}(M)} - \left\langle \rho_{\xi}[f], \mathcal{L}_{\frac{\delta F}{\delta m}} \frac{\delta G}{\delta \rho_{\xi}} - \mathcal{L}_{\frac{\delta G}{\delta m}} \frac{\delta F}{\delta \rho_{\xi}} \right\rangle_{C^\infty(M)},
\end{align}
which is precisely the ($-$)-Lie--Poisson bracket on $\mathfrak{s}^*_0$ evaluated at the image of $\mathsf{J}_{\xi}$.
\end{proof}

\subsection*{Proof of proposition \ref{Jbigrop}}

\begin{proof}
We repeat the argument used in Proposition \ref{Jepsprop}. We replace (\ref{chainrule_epsilon}) with
\begin{align}
\frac{\delta \tilde{F}}{\delta f} = \left\langle \frac{\delta F}{\delta m}, p \right\rangle + \int_I d\xi \ (1+\xi) f^\xi \frac{\delta F}{\delta \rho_{\xi}},
\end{align}
and compute $\{ \tilde{F}, \tilde{G} \}_{KT}$. Since integration in $\xi$ can be pulled out of the canonical Poisson bracket, we can separate $\{ \tilde{F}, \tilde{G} \}_{KT}$ into three groups of terms, each with $0,1$ or $2$ $\xi$-integrals respectively. The first group of terms with no $\xi$-dependence is just $-\langle m[f], [\delta F/\delta m, \delta G/\delta m]\rangle$. For the second group of terms with a single $\xi$-integral, repeating the computation in (\ref{Jeps2ndline}) shows that
\begin{flalign}
& \int_I d\xi \int d^nqd^np \ f \left\{(1+\xi) f^\xi \frac{\delta F}{\delta \rho_{\xi}}, \left\langle \frac{\delta G}{\delta m}, p \right\rangle  \right\} \nonumber &\\
= & \int_I d\xi \int d^nq d^n p \ f^{1+\xi} \left\{\frac{\delta F}{\delta \rho_{\xi}}, \left\langle \frac{\delta G}{\delta m}, p \right\rangle  \right\}, \nonumber &\\
= & \int_I d\xi \ \left\langle \rho_{\xi}[f], \mathcal{L}_{\frac{\delta G}{\delta m}} \frac{\delta F}{\delta \rho_{\xi}} \right\rangle_{C^\infty(M)}.&
\end{flalign}
The third group of terms can be written as
\begin{align}\label{Jbig3rd}
\int_{I\times I} d\xi d\xi^\prime \int d^n q d^n p \ (1+\xi)(1+\xi^\prime) f \left\{ f^\xi \frac{\delta F}{\delta \rho_{\xi}}, f^{\xi\prime}\frac{\delta G}{\delta \rho_{\xi^\prime}}  \right\}.
\end{align}
A computation entirely analogous to (\ref{Jeps3rdline}) shows that the term (\ref{Jbig3rd}) vanishes, because ${\delta F}/{\delta \rho_{\xi}}$ and ${\delta G}/{\delta \rho_{\xi^\prime}}$ are functions of $q$ only for fixed $\xi, \xi^\prime$. Putting everything together, we have
\begin{align}
\left\{\tilde{F},\tilde{G} \right\}_{KT} = &-\left\langle m[f], \left[\frac{\delta F}{\delta m}, \frac{\delta G}{\delta m}\right]\right\rangle_{\mathrm{Vect}(M)} \nonumber \\
 & \qquad - \int_I d\xi \ \left\langle \rho_{\xi}[f], \mathcal{L}_{\frac{\delta F}{\delta m}} \frac{\delta G}{\delta \rho_{\xi}} - \mathcal{L}_{\frac{\delta G}{\delta m}} \frac{\delta F}{\delta \rho_{\xi}} \right\rangle_{C^\infty(M)},
\end{align}
which proves that $\mathsf{J}$ is indeed a Poisson map.
\end{proof}

\end{appendices}

\bibliography{references}{}
\bibliographystyle{abbrv}
\end{document}